\font\twlgot =eufm10 scaled \magstep1 \font\egtgot =eufm8
\font\sevgot =eufm7 \font\twlmsb =msbm10 scaled \magstep1
\font\egtmsb =msbm8 \font\sevmsb =msbm7
\def\Bbb{\protect\pBbb}
\def\pBbb{\relax\ifmmode\expandafter\Bb\else\typeout{You cann't use
Bbb in text mode}\fi}
\def\Bb #1{{\fam\msbfam\relax#1}}
\def\thebibliography#1{\section*{References}\list
  {[\arabic{enumi}]}{\settowidth\labelwidth{#1}\leftmargin\labelwidth
    \advance\leftmargin\labelsep
    \usecounter{enumi}}
    \def\newblock{\hskip .11em plus .33em minus .07em}
    \sloppy\clubpenalty4000\widowpenalty4000
    \sfcode`\.=1000\relax}
\def\op#1{\mathop{\fam0 #1}\limits}
\newcommand{\id}{{\rm Id\,}}
\newcommand{\beq}{\begin{equation}}
\newcommand{\eeq}{\end{equation}}
\newcommand{\ben}{\begin{eqnarray}}
\newcommand{\een}{\end{eqnarray}}
\newcommand{\be}{\begin{eqnarray*}}
\newcommand{\ee}{\end{eqnarray*}}
\newcommand{\bea}{\begin{eqalph}}
\newcommand{\eea}{\end{eqalph}}
\newcommand{\cV}{{\cal V}}
\newcommand{\cH}{{\cal H}}
\newcommand{\cF}{{\cal F}}
\newcommand{\cG}{{\cal G}}
\newcommand{\bs}{{\bf s}}
\newcommand{\al}{\alpha}
\newcommand{\bt}{\beta}
\newcommand{\dl}{\delta}
\newcommand{\la}{\lambda}
\newcommand{\f}{\phi}
\newcommand{\Om}{\Omega}
\newcommand{\m}{\mu}
\newcommand{\g}{\gamma}
\newcommand{\vt}{\vartheta}
\newcommand{\vf}{\varphi}
\newcommand{\up}{\upsilon}
\newcommand{\di}{{\rm dim\,}}
\newcommand{\si}{\sigma}
\newcommand{\w}{\wedge}
\newcommand{\dr}{\partial}
\newcommand{\ar}{\op\longrightarrow}
\newcommand{\ot}{\otimes}
\newcommand{\ve}{\varepsilon}
\let\ssection=\section
\renewcommand{\section}{\setcounter{equation}{0}\ssection}
\newcounter{example}[section]
\newcounter{remark}[section]
\newcounter{theorem}[section]
\newcounter{proposition}[section]
\newcounter{lemma}[section]
\newcounter{corollary}[section]
\newcounter{definition}[section]
\def\theremark{\arabic{section}.\arabic{remark}}
\def\thedefinition{\arabic{section}.\arabic{definition}}
\newenvironment{proof}{\noindent
{\bf Proof.}}{$\Box$ \medskip}
\newenvironment{rem}{\refstepcounter{remark}\medskip\noindent{\bf
Remark \theremark.}}{\medskip}
\newenvironment{theo}{\refstepcounter{definition}
\bigskip\noindent{\bf Theorem \thedefinition.} \it}{\medskip}
\newenvironment{prop}{\refstepcounter{definition}
\bigskip\noindent{\bf Proposition \thedefinition.}\it}{\medskip}
\newenvironment{lem}{\refstepcounter{definition}
\bigskip\noindent{\bf Lemma \thedefinition.}\it}{\medskip}
\newenvironment{defi}{\refstepcounter{definition}
\bigskip\noindent{\bf Definition \thedefinition.}\it}{\medskip}
\newcommand{\mar}[1]{}
\begin{document}
\hbox{}

\begin{center}

{\large \bf SUPERINTEGRABLE HAMILTONIAN SYSTEMS WITH NONCOMPACT
INVARIANT SUBMANIFOLDS. KEPLER SYSTEM}
\bigskip

{\sc G. SARDANASHVILY}
\bigskip

\begin{small}

{\it Department of Theoretical Physics, Physics Faculty, Moscow
State University

117234 Moscow, Russia}
\end{small}

\end{center}

\bigskip

\begin{small}

\noindent The Mishchenko--Fomenko theorem on superintegrable
Hamiltonian systems is generalized to superintegrable Hamiltonian
systems with noncompact invariant submanifolds. It is formulated
in the case of globally superintegrable Hamiltonian systems which
admit global generalized action-angle coordinates. The well known
Kepler system falls into two different globally superintegrable
systems with compact and noncompact invariant submanifolds.

\medskip

\end{small}

\section{Introduction}

Let $(Z,\Om)$ be a $2n$-dimensional connected symplectic manifold.
Given a superintegrable system
\mar{i00}\beq
F=(F_1,\ldots,F_k), \qquad n\leq k<2n, \label{i00}
\eeq
on $(Z,\Om)$ (Definition \ref{i0}), the well known Mishchenko --
Fomenko theorem (Theorem \ref{nc0}) states the existence of
(semi-local) generalized action-angle coordinates around its
connected compact invariant submanifold
\cite{bols03,fasso05,mishc}. If $k=n$, this is the case of
completely integrable systems (Definition \ref{cmp21}).

The Mishchenko -- Fomenko theorem has been extended to
superintegrable systems with noncompact invariant submanifolds
(Theorem \ref{nc0'}) \cite{fior2}. These submanifolds are
diffeomorphic to a toroidal cylinder
\mar{g120}\beq
\Bbb R^{m-r}\times T^r, \qquad m=2n-k, \qquad 0\leq r\leq m.
\label{g120}
\eeq
Partially and completely integrable systems with noncompact
invariant submanifolds were studied in \cite{fior2,jmp03,vin}. Our
goal here is the following.

We formulate Theorem \ref{nc0'} in the case of globally
superintegrable Hamiltonian systems, which admit global
generalized action-angle coordinates (Definition \ref{cmp30},
Theorem \ref{cmp36}). Herewith, Theorem \ref{cmp34} establishes
the sufficient condition of the existence of global generalized
action-angle coordinates \cite{jmp07} (see \cite{daz,fasso05} for
the case of compact invariant submanifolds).

Note that the Mishchenko -- Fomenko theorem is mainly applied to
superintegrable systems whose integrals of motion form a compact
Lie algebra. The group generated by flows of their Hamiltonian
vector fields is compact. Since a fibration of a compact manifold
possesses compact fibers, invariant submanifolds of such a
superintegrable system are compact. With Theorems \ref{nc0'} and
\ref{cmp36}, one can describe superintegrable Hamiltonian system
with an arbitrary Lie algebra of integrals of motion.

It may happen that a Hamiltonian system falls into different
superintegrable Hamiltonian systems on different open subsets of a
symplectic manifold. This is just the case of the Kepler system
considered in Section 7. It contains two different globally
superintegrable systems on different open subsets of a phase space
$\Bbb R^4$. Their integrals of motion form the Lie algebras
$so(3)$ and $so(2,1)$ with compact and non-compact invariant
submanifolds, respectively.

\section{The Mishchenko -- Fomenko theorem in a general setting}

Throughout the paper, all functions and maps are smooth, and
manifolds are real smooth and paracompact. We are not concerned
with the real-analytic case because a paracompact real-analytic
manifold admits the partition of unity by smooth functions. As a
consequence, sheaves of modules over real-analytic functions need
not be acyclic that is essential for our consideration.

\begin{defi} \label{i0} \mar{i0}
Let $(Z,\Om)$ be a $2n$-dimensional connected symplectic manifold,
and let $(C^\infty(Z), \{,\})$ be the Poisson algebra of smooth
real functions on $Z$. A subset $F$ (\ref{i00}) of the Poisson
algebra $C^\infty(Z)$ is called a superintegrable system if the
following conditions hold.

(i) All the functions $F_i$ (called the generating functions of a
superintegrable system) are independent, i.e., the $k$-form
$\op\w^kdF_i$ nowhere vanishes on $Z$. It follows that the map
$F:Z\to \Bbb R^k$ is a submersion, i.e.,
\mar{nc4}\beq
F:Z\to N=F(Z) \label{nc4}
\eeq
is a fibered manifold over a domain (i.e., contractible open
subset) $N\subset\Bbb R^k$ endowed with the coordinates $(x_i)$
such that $x_i\circ F=F_i$.

(ii) There exist smooth real functions $s_{ij}$ on $N$ such that
\mar{nc1}\beq
\{F_i,F_j\}= s_{ij}\circ F, \qquad i,j=1,\ldots, k. \label{nc1}
\eeq

(iii) The matrix function $\bs$ with the entries $s_{ij}$
(\ref{nc1}) is of constant corank $m=2n-k$ at all points of $N$.
\end{defi}

If $k=n$, then $\bs=0$, and we are in the case of completely
integrable systems as follows.

\begin{defi} \label{cmp21} \mar{cmp21} The subset $F$, $k=n$, (\ref{i00})
of the Poisson algebra $C^\infty(Z)$ on a symplectic manifold
$(Z,\Om)$ is called a completely integrable system if $F_i$ are
independent functions in involution.
\end{defi}

If $k>n$, the matrix $\bs$ is necessarily nonzero. Therefore,
superintegrable systems also are called noncommutative completely
integrable systems. If $k=2n-1$, a superintegrable system is
called maximally superintegrable.

The following two assertions clarify the structure of
superintegrable systems \cite{fasso05,fior2}.

\begin{prop} \label{nc7} \mar{nc7} Given a symplectic manifold $(Z,\Om)$,
let $F:Z\to N$ be a fibered manifold such that, for any two
functions $f$, $f'$ constant on fibers of $F$, their Poisson
bracket $\{f,f'\}$ is so. Then $N$ is provided with an unique
coinduced Poisson structure $\{,\}_N$ such that $F$ is a Poisson
morphism \cite{vaism}.
\end{prop}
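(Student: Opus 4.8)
The plan is to transport the Poisson bracket of $Z$ to $N$ along the pullback homomorphism $F^*:C^\infty(N)\to C^\infty(Z)$, and to read off all the required properties of $\{,\}_N$ from the corresponding properties on $Z$ via the injectivity of $F^*$. The first step is to identify $C^\infty(N)$ with the subalgebra of functions on $Z$ constant on the fibers of $F$. Since $F$ is a fibered manifold, it is a surjective submersion; surjectivity gives that $F^*$ is injective, and the existence of local sections $\si$ of $F$ gives the converse inclusion: if $f\in C^\infty(Z)$ is constant on the fibers of $F$, its set-theoretic descent $h$ (well defined by surjectivity) satisfies $h=f\circ\si$ locally, hence is smooth, so $f=F^*h\in\im F^*$. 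Write $\cA=\im F^*\subset C^\infty(Z)$; by hypothesis $\cA$ is closed under $\{,\}$.

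Next I define the bracket. For $g,g'\in C^\infty(N)$ the functions $F^*g,F^*g'$ lie in $\cA$, so $\{F^*g,F^*g'\}\in\cA$, and by injectivity of $F^*$ there is a unique function $\{g,g'\}_N\in C^\infty(N)$ with $F^*\{g,g'\}_N=\{F^*g,F^*g'\}$. This identity is precisely the requirement that $F$ be a Poisson morphism, and since $F^*$ is injective it determines $\{g,g'\}_N$ uniquely; this already proves the uniqueness assertion of Proposition \ref{nc7}. Bilinearity and skew-symmetry of $\{,\}_N$ are immediate from those of $\{,\}$ together with the linearity and injectivity of $F^*$.

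It remains to check the Leibniz rule and the Jacobi identity, both by applying $F^*$, using that it is an algebra homomorphism, and then cancelling it by injectivity. For Leibniz, $F^*\{g,g'g''\}_N=\{F^*g,(F^*g')(F^*g'')\}=(F^*g')\{F^*g,F^*g''\}+\{F^*g,F^*g'\}(F^*g'')=F^*\!\left(g'\{g,g''\}_N+\{g,g'\}_N g''\right)$, so $\{g,g'g''\}_N=g'\{g,g''\}_N+\{g,g'\}_N g''$. For Jacobi, applying $F^*$ to the cyclic sum $\{\{g,g'\}_N,g''\}_N+\{\{g',g''\}_N,g\}_N+\{\{g'',g\}_N,g'\}_N$ yields the cyclic sum $\{\{F^*g,F^*g'\},F^*g''\}+\dots$, which vanishes by the Jacobi identity on $Z$; injectivity of $F^*$ then gives vanishing on $N$. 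Thus $\{,\}_N$ is a skew-symmetric biderivation of $C^\infty(N)$ satisfying the Jacobi identity, i.e. a Poisson structure (equivalently, a bivector field on $N$ with vanishing Schouten bracket), and $F$ is a Poisson morphism by construction.

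I do not expect a genuine obstacle here: the argument is formal once the identification in the first step is in place. The only point that needs care is exactly that identification — that a smooth function constant on the fibers of $F$ descends to a \emph{smooth} function on $N$ — which is where the submersion hypothesis (local sections) is used; after that everything is diagram-chasing through the injective homomorphism $F^*$.
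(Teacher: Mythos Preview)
Your argument is correct and is the standard one: identify $C^\infty(N)$ with the fiberwise-constant functions via the injective pullback $F^*$ (using that a surjective submersion has local sections, so the descent is smooth), then transport the bracket and read off bilinearity, skew-symmetry, Leibniz and Jacobi through $F^*$. The paper does not supply its own proof of this proposition but simply cites Vaisman's book \cite{vaism}, where the same reasoning appears; your write-up is a faithful unpacking of that reference.
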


Since any function constant on fibers of $F$ is a pull-back of
some function on $N$, the superintegrable system (\ref{i00})
satisfies the condition of Proposition \ref{nc7} due to item (ii)
of Definition \ref{i0}. Thus, the base $N$ of the fibration
(\ref{nc4}) is endowed with a coinduced Poisson structure of
corank $m$. With respect to coordinates $x_i$ in item (i) of
Definition \ref{i0} its bivector field reads
\mar{cmp1}\beq
w=s_{ij}(x_k)\dr^i\w\dr^j. \label{cmp1}
\eeq

\begin{prop} \label{nc8} \mar{nc8} Given a fibered manifold $F:Z\to N$ in Proposition \ref{nc7},
the following conditions are equivalent \cite{fasso05,libe}:

(i) the rank of the coinduced Poisson structure $\{,\}_N$ on $N$
equals $2\di N-\di Z$,

(ii) the fibers of $F$ are isotropic,

(iii) the fibers of $F$ are  maximal integral manifolds of the
involutive distribution spanned by the Hamiltonian vector fields
of the pull-back $F^*C$ of Casimir functions $C$ of the coinduced
Poisson structure (\ref{cmp1}) on $N$.
\end{prop}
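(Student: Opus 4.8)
The plan is to prove the cyclic chain of implications (i) $\Rightarrow$ (iii) $\Rightarrow$ (ii) $\Rightarrow$ (i), working entirely over the (contractible, hence with trivial relevant cohomology) base $N$ equipped with the coinduced Poisson bivector $w$ of~(\ref{cmp1}), which has constant corank $m=2n-k$. First I would fix the local picture on $N$: by the splitting theorem for Poisson manifolds of constant rank, around any point of $N$ one can choose coordinates in which $w$ has the canonical block form, so that the symplectic leaves of $\{,\}_N$ have dimension $2(k-n)$ and the number of independent local Casimir functions is exactly the corank $m$. Pulling these Casimirs back by $F$ gives $m$ functions on $Z$ whose Hamiltonian vector fields (with respect to $\Om$) are tangent to the fibers of $F$, since $\{F^*C,f\}=0$ for every $f$ that is a pull-back from $N$ and, by Definition~\ref{i0}(ii) together with Proposition~\ref{nc7}, the functions constant on fibers are precisely such pull-backs. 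These $m$ vector fields are pointwise independent (because the Casimirs are independent on $N$ and $F$ is a submersion) and span an involutive distribution of rank $m=\dim Z-\dim N$ contained in the vertical distribution $VF$; since $\dim VF=m$ as well, this distribution equals $VF$ fiberwise, which gives (iii) and simultaneously shows the fibers are its maximal integral manifolds.

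For (iii) $\Rightarrow$ (ii): if each fiber is an integral manifold of the distribution spanned by the $X_{F^*C_a}$, then $TF_{\rm fiber}$ is spanned by these Hamiltonian vector fields, and for any two of them $\Om(X_{F^*C_a},X_{F^*C_b})=\{F^*C_a,F^*C_b\}=F^*\{C_a,C_b\}_N=0$ because the $C_a$ are Casimirs. Hence $\Om$ vanishes on the tangent space to each fiber, i.e.\ the fibers are isotropic. For (ii) $\Rightarrow$ (i): isotropy of the fibers means $VF\subset (VF)^{\perp_\Om}$, and taking $\Om$-orthogonal complements (a dimension count using nondegeneracy of $\Om$) gives $\dim VF\le \dim Z-\dim VF$; combined with the submersion property this pins down the rank of the vertical distribution and, via Proposition~\ref{nc7} identifying Poisson brackets on $N$ with brackets of pull-backs on $Z$, forces the coinduced Poisson structure to have rank exactly $2\dim N-\dim Z$. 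Concretely, the image $(TF)(T_zZ)$ together with the symplectic orthogonality translates the corank of $\Om$ restricted to a complement of $VF$ into the rank of $w$ at the image point.

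The main obstacle I expect is the careful bookkeeping in the step (i) $\Rightarrow$ (iii): one must verify that the pulled-back Casimir functions really do generate the \emph{entire} vertical distribution and not merely a subbundle of it, which is where the rank hypothesis (i)—equivalently the constant corank assumption in Definition~\ref{i0}(iii)—does the essential work, and where one needs the splitting theorem to guarantee that locally there are exactly $m$ functionally independent Casimirs. The rest is linear symplectic algebra (the interplay between $VF$ and its $\Om$-orthogonal complement) plus the already-established Proposition~\ref{nc7} that makes $F$ a Poisson morphism, so that brackets upstairs and downstairs match. I would present the argument leaf-by-leaf on $N$, invoking contractibility of $N$ only where a global Casimir or a global primitive is needed, and otherwise keeping the reasoning local-to-global via the constancy of all the ranks involved.
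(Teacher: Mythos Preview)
The paper does not supply its own proof of this proposition; it states the result with references to Fass\`o and Libermann--Marle and then immediately applies it. There is therefore no in-paper argument to compare your outline against.

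Your cyclic scheme is essentially sound. The step (i)$\Rightarrow$(iii) correctly uses the rank hypothesis to produce $m=\di Z-\di N$ independent local Casimirs, whose pulled-back Hamiltonian vector fields are vertical (they annihilate every $F^*g$) and independent (because $\Om$ is nondegenerate and $F$ is a submersion), hence span the $m$-dimensional vertical bundle $VF$; and (iii)$\Rightarrow$(ii) is the clean observation you give. The only soft spot is (ii)$\Rightarrow$(i): the inequality $\di VF\le\di Z-\di VF$ you extract from isotropy merely says $m\le n$ and does not by itself determine the rank of $w$. What actually does the work is that the Hamiltonian vector fields $X_{F^*f}$, $f\in C^\infty(N)$, span $(VF)^{\perp_\Om}$ and project under $F_*$ onto the characteristic distribution of $\{,\}_N$, with kernel $(VF)^{\perp_\Om}\cap VF$; this gives pointwise
\[
\mathrm{rank}\,\{,\}_N \;=\; \di(VF)^{\perp_\Om}-\di\bigl((VF)^{\perp_\Om}\cap VF\bigr)\;=\;k-\di\bigl((VF)^{\perp_\Om}\cap VF\bigr),
\]
and then (ii), i.e.\ $VF\subset(VF)^{\perp_\Om}$, makes the intersection all of $VF$ and yields rank $=k-m=2\di N-\di Z$. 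With that one line made explicit your argument is complete.
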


It is readily observed that the fibered manifold $F$ (\ref{nc4})
obeys condition (i) of Proposition \ref{nc8} due to item (iii) of
Definition \ref{i0}, namely, $k-m= 2(k-n)$.

Fibers of the fibered manifold $F$ (\ref{nc4}) are called the
invariant submanifolds.

\begin{rem} \label{cmp8} \mar{cmp8} In many physical models, condition (i) of Definition
\ref{i0} fails to hold. Often, it is replaced with that a subset
$Z_R\subset Z$ of regular points (where $\op\w^kdF_i\neq 0$) is
open and dense. Let $M$ be an invariant submanifold through a
regular point $z\in Z_R\subset Z$. Then it is regular, i.e.,
$M\subset Z_R$. Let $M$ admit a regular open saturated
neighborhood $U_M$ (i.e., a fiber of $F$ through a point of $U_M$
belongs to $U_M$). For instance, any compact invariant submanifold
$M$ has such a neighborhood $U_M$. The restriction of functions
$F_i$ to $U_M$ defines a superintegrable system on $U_M$ which
obeys Definition \ref{i0}. In this case, one says that a
superintegrable system is considered around its invariant
submanifold $M$.
\end{rem}

Given a superintegrable system in accordance with Definition
\ref{i0}, the above mentioned generalization of the Mishchenko --
Fomenko theorem to noncompact invariant submanifolds states the
following \cite{fior2}.

\begin{theo} \label{nc0'} \mar{nc0'} Let the Hamiltonian vector fields $\vt_i$ of the
functions $F_i$ be complete, and let the fibers of the fibered
manifold $F$ (\ref{nc4}) be connected and mutually diffeomorphic.
Then the following hold.

(I) The fibers of $F$ (\ref{nc4}) are diffeomorphic to a toroidal
cylinder (\ref{g120}).

(II) Given a fiber $M$ of $F$ (\ref{nc4}), there exists its open
saturated neighborhood $U$ which is a trivial principal bundle
with the structure group (\ref{g120}).

(III) The neighborhood $U$ is provided with the bundle
(generalized action-angle) coordinates $(I_\la,p_s,q^s, y^\la)$,
$\la=1,\ldots, m$, $s=1,\ldots,n-m$, such that (i) the action
coordinates $(I_\la)$ are values of Casimir functions of the
coinduced Poisson structure $\{,\}_N$ on $F(U)$, (ii) the
generalized angle coordinates $(y^\la)$ are coordinates on a
toroidal cylinder, and (iii) the symplectic form $\Om$ on $U$
reads
\mar{cmp6}\beq
\Om= dI_\la\w dy^\la + dp_s\w dq^s. \label{cmp6}
\eeq
\end{theo}

\begin{proof}
It follows from item (iii) of Proposition \ref{nc8} that every
fiber $M$ of the fibered manifold (\ref{nc4}) is a maximal
integral manifold of the involutive distribution spanned by the
Hamiltonian vector fields $\up_\la$ of the pull-back $F^*C_\la$ of
$m$ independent Casimir functions $\{C_1,\ldots, C_m\}$ of the
Poisson structure $\{,\}_N$ (\ref{cmp1}) on an open neighborhood
$N_M$ of a point $F(M)\in N$. Let us put $U_M=F^{-1}(N_M)$. It is
an open saturated neighborhood of $M$. Consequently, invariant
submanifolds of a superintegrable system (\ref{i00}) on $U_M$ are
maximal integral manifolds of the partially integrable system
\mar{cmp4}\beq
C^*=(F^*C_1, \ldots, F^*C_m), \qquad 0<m\leq n, \label{cmp4}
\eeq
on a symplectic manifold $(U_M,\Om)$. Therefore, statements (I) --
(III) of Theorem \ref{nc0'} are the corollaries of forthcoming
Theorem \ref{nc6}. Its condition (i) is satisfied as follows. Let
$M'$ be an arbitrary fiber of the fibered manifold $F:U_M\to N_M$
(\ref{nc4}). Since
\mar{j21}\beq
F^*C_\la(z)= (C_\la\circ F)(z)= C_\la(F_i(z)),\qquad z\in M',
\label{j21}
\eeq
the Hamiltonian vector fields $\up_\la$ on $M'$ are $\Bbb
R$-linear combinations of Hamiltonian vector fields $\vt_i$ of the
functions $F_i$ It follows that $\up_\la$ are elements of a
finite-dimensional real Lie algebra of vector fields on $M'$
generated by the vector fields $\vt_i$. Since vector fields
$\vt_i$ are complete, the vector fields $v_\la$ on $M'$ also are
complete (see forthcoming Remark \ref{zz95}). Consequently, these
vector fields are complete on $U_M$ because they are vertical
vector fields on $U_M\to N$. The proof of Theorem \ref{nc6} shows
that the action coordinates $(I_\la)$ are values of Casimir
functions expressed into the original ones $C_\la$.
\end{proof}

\begin{rem} \label{zz95} \mar{zz95} If complete vector fields on a
smooth manifold constitute a basis for a finite-dimensional real
Lie algebra, any element of this Lie algebra is complete
\cite{palais}.
\end{rem}

\begin{rem} Since an open neighborhood $U$ in item (II) of
Theorem \ref{nc0'} is not contractible, unless $r=0$, the
generalized action-angle coordinates on $U$ sometimes are called
semi-local.
\end{rem}

\begin{rem} \label{zz90} \mar{zz90} The condition of the completeness of Hamiltonian
vector fields of the generating functions $F_i$ in Theorem
\ref{nc0'} is rather restrictive (see the Kepler system in Section
7). One can replace this condition with that the Hamiltonian
vector fields of the pull-back onto $Z$ of Casimir functions on
$N$ are complete.
\end{rem}

If the conditions of Theorem \ref{nc0'} are replaced with that the
fibers of the fibered manifold $F$ (\ref{nc4}) are compact and
connected, this theorem restarts the Mishchenko -- Fomenko one as
follows.

\begin{theo} \label{nc0} \mar{nc0}
Let the fibers of the fibered manifold $F$ (\ref{nc4}) be
connected and compact. Then they are diffeomorphic to a torus
$T^m$, and statements (II) -- (III) of Theorem \ref{nc0'} hold.
\end{theo}

\begin{rem}
In Theorem \ref{nc0}, the Hamiltonian vector fields $\up_\la$ are
complete because fibers of the fibered manifold $F$ (\ref{nc4})
are compact. As well known, any vector field on a compact manifold
is complete.
\end{rem}

If $F$ (\ref{i00}) is a completely integrable system, the
coinduced Poisson structure on $N$ equals zero, and the generating
functions $F_i$ are the pull-back of $n$ independent functions on
$N$. Then Theorems \ref{nc0} and \ref{nc0'} come to the Liouville
-- Arnold theorem \cite{arn1,laz} and its generalization (Theorem
\ref{cmp20}) to the case of noncompact invariant submanifolds
\cite{fior,book05}, respectively. In this case, the partially
integrable system $C^*$ (\ref{cmp4}) is exactly the original
completely integrable system $F$.

\begin{theo} \label{cmp20} \mar{cmp20} Given a completely integrable system
$F$ in accordance with Definition \ref{cmp21}, let the Hamiltonian
vector fields $\vt_i$ of the functions $F_i$ be complete, and let
the fibers of the fibered manifold $F$ (\ref{nc4}) be connected
and mutually diffeomorphic. Then items (I) and (II) of Theorem
\ref{nc0'} hold, and its item (III) is replaced with the following
one.

(III') The neighborhood $U$ is provided with the bundle
(generalized action-angle) coordinates $(I_\la,y^\la)$,
$\la=1,\ldots, n$, such that the angle coordinates $(y^\la)$ are
coordinates on a toroidal cylinder, and the symplectic form $\Om$
on $U$ reads
\mar{cmp66}\beq
\Om= dI_\la\w dy^\la. \label{cmp66}
\eeq
\end{theo}

Turn now to above mentioned Theorem \ref{nc6}. Recall that a
collection $\{S_1,\ldots, S_m\}$ of $m\leq n$ independent smooth
real functions in involution on a symplectic manifold $(Z,\Om)$ is
called a partially integrable system. Let us consider the map
\mar{g106}\beq
S:Z\to W\subset\Bbb R^m. \label{g106}
\eeq
Since functions $S_\la$ are everywhere independent, this map is a
submersion onto a domain $W\subset \Bbb R^m$, i.e., $S$
(\ref{g106}) is a fibered manifold of fiber dimension $2n-m$.
Hamiltonian vector fields $v_\la$ of functions $S_\la$ are
mutually commutative and independent. Consequently, they span an
$m$-dimensional involutive distribution on $Z$ whose maximal
integral manifolds constitute an isotropic foliation $\cF$ of $Z$.
Because functions $S_\la$ are constant on leaves of this
foliation, each fiber of a fibered manifold $Z\to W$ (\ref{g106})
is foliated by the leaves of the foliation $\cF$. If $m=n$, we are
in the case of a completely integrable system, and leaves of $\cF$
are connected components of fibers of the fibered manifold
(\ref{g106}). The Poincar\'e -- Lyapounov -- Nekhoroshev theorem
\cite{gaeta,nekh94} generalizes the Liouville -- Arnold one to a
partially integrable system if leaves of the foliation $\cF$ are
compact. It imposes a sufficient condition which Hamiltonian
vector fields $v_\la$ must satisfy in order that the foliation
$\cF$ is a fibered manifold \cite{gaeta,gaeta03}. Extending the
Poincar\'e -- Lyapounov -- Nekhoroshev theorem to the case of
noncompact invariant submanifolds, we in fact assume from the
beginning that these submanifolds form a fibered manifold
\cite{jmp03,book05}.

\begin{theo} \label{nc6} \mar{nc6}
Let a partially integrable system $\{S_1,\ldots,S_m\}$ on a
symplectic manifold $(Z,\Om)$ satisfy the following conditions.

(i) The Hamiltonian vector fields $v_\la$ of $S_\la$ are complete.

(ii) The foliation $\cF$ is a fibered manifold
\mar{d20'}\beq
\pi:Z\to N \label{d20'}
\eeq
whose fibers are mutually diffeomorphic.

\noindent Then the following hold.

(I) The fibers of $\cF$ are diffeomorphic to a toroidal cylinder
(\ref{g120}).

(II) Given a fiber $M$ of $\cF$, there exists its open saturated
neighborhood $U$ which is a trivial principal bundle with the
structure group (\ref{g120}).

(III) The neighborhood $U$ is provided with the bundle
(generalized action-angle) coordinates
\be
(I_\la,p_s,q^s,y^\la)\to (I_\la,p_s,q^s), \qquad \la=1,\ldots,m,
\quad s=1,\ldots n-m,
\ee
such that: (i) the action coordinates $(I_\la)$ (\ref{cmp25}) are
expressed into the values of the functions $(S_\la)$, (ii) the
angle coordinates $(y^\la)$ (\ref{cmp25}) are coordinates on a
toroidal cylinder, and (iii) the symplectic form $\Om$ on $U$
reads
\mar{cmp6'}\beq
\Om= dI_\la\w dy^\la + dp_s\w dq^s. \label{cmp6'}
\eeq
\end{theo}

\begin{proof} See Section 3 for the proof.
\end{proof}

If one supposes from the beginning that leaves of the foliation
$\cF$ are compact, the conditions of Theorem \ref{nc6} can be
replaced with that $\cF$ is a fibered manifold due to the
following.

\begin{prop} \label{cmp15} \mar{cmp15}
Any fibered manifold whose fibers are diffeomorphic either to
$\Bbb R^r$ or a connected compact manifold $K$ is a fiber bundle
\cite{meig}.
\end{prop}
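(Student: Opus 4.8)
The plan is to reduce the statement to local triviality over the base and then run an Ehresmann-connection argument, treating the two cases of the hypothesis separately. Being a fiber bundle is local over $N$, so it suffices to show that every point of $N$ has a neighborhood $V$, which we may take diffeomorphic to $\Bbb R^{\di N}$, over which $\pi$ is trivial. Since $Z$ is paracompact, a partition of unity produces on $\pi^{-1}(V)$ an Ehresmann connection, i.e.\ a distribution $H$ complementary to the vertical distribution $\Ker d\pi$; for a coordinate vector field $\dr_i$ on $V\cong\Bbb R^{\di N}$ let $\wt\dr_i$ denote its horizontal lift, the unique section of $H$ over $\pi^{-1}(V)$ projecting onto $\dr_i$. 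If each $\wt\dr_i$ is a complete vector field, its flow covers the translation flow of $\dr_i$, and composing these flows along the coordinate directions yields a diffeomorphism $\pi^{-1}(V)\to V\times\pi^{-1}(0)$ over $V$. Thus the whole task is to equip some $\pi^{-1}(V)$ with an Ehresmann connection whose coordinate horizontal lifts are complete.

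If the fibers are compact, this is Ehresmann's classical fibration theorem. Choosing $V$ relatively compact, the submersion is proper over $V$, its fibers being compact, so the integral curves of $\wt\dr_i$ issuing from $\pi^{-1}(V)$ remain in the compact set $\pi^{-1}(\ol V)$ and $\wt\dr_i$ is complete. Hence $\pi^{-1}(V)\cong V\times K$.

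The harder case is that of fibers diffeomorphic to $\Bbb R^r$, where completeness fails for a generic connection — a coordinate horizontal lift may run to infinity inside a noncompact fiber in finite parameter time — so the connection must be built by hand, exploiting that $\Bbb R^r$ is contractible. First I would construct, over the contractible base $V$, a global section $\si\colon V\to\pi^{-1}(V)$, the successive extension problem over a cell decomposition of $V$ being unobstructed because the model fiber $\Bbb R^r$ is contractible. Next I would fix a fiberwise Riemannian metric on $\Ker d\pi$ for which every fiber is a complete metric space, patched by a partition of unity from complete metrics on the individual copies of $\Bbb R^r$, and choose $H$ so that the section $\si$ is horizontal. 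Finally I would correct each $\wt\dr_i$ by a vertical vector field supported away from $\si(V)$ — which leaves its projection to $V$ unchanged — so that the horizontal lifts cannot leave, in finite parameter time, the tubular regions of bounded fiberwise distance to $\si(V)$. Completeness of the corrected lifts then follows by a standard escape-time estimate, and the reduction of the first paragraph concludes the proof.

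The main obstacle is precisely this last construction in the $\Bbb R^r$ case: simultaneously obtaining the section $\si$ and taming the corrected lifts at infinity inside the fibers. Both ingredients rely decisively on the uniform hypothesis that every fiber is the single contractible model $\Bbb R^r$ — without such uniformity a submersion with mutually diffeomorphic fibers need not be a fiber bundle — and verifying that the corrected horizontal lifts are genuinely complete is the technical heart of the argument; cf.\ \cite{meig}.
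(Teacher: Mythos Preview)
The paper does not actually prove this proposition: it is stated as a known fact with a citation to Meigniez \cite{meig}, and no argument is given. So there is no ``paper's proof'' to compare against beyond that reference.

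Your compact case is correct and standard --- it is exactly Ehresmann's fibration theorem, and the properness argument you give is the usual one.

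Your $\Bbb R^r$ case, however, has a genuine circularity in the section step. You write that the extension problem for $\si$ over a cell decomposition of $V$ is unobstructed because the fiber is contractible. But obstruction theory for extending sections applies to fibrations (Serre or Hurewicz), where the homotopy lifting property is what lets you identify and kill the obstruction cocycle. A bare submersion is not a fibration a priori --- indeed, establishing that it is one is essentially equivalent to what you are trying to prove. So you cannot invoke obstruction theory here without begging the question. The later ``correction'' step, taming the horizontal lifts at fiberwise infinity, is also only sketched; you yourself flag it as the technical heart and defer to \cite{meig}. That is honest, but it means your $\Bbb R^r$ argument is really an outline of a plausible strategy together with a citation, not a proof.

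Since the paper itself simply cites \cite{meig} for this result, your proposal is in practice no worse off than the paper --- both ultimately rely on Meigniez --- but you should not present the $\Bbb R^r$ case as if the section construction were routine.
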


\section{Proof of Theorem \ref{nc6}}

(I) In accordance with the well-known theorem \cite{onish,palais},
complete Hamiltonian vector fields $v_\la$ define an action of a
simply connected Lie group $G$ on $Z$. Because vector fields
$v_\la$ are mutually commutative, it is the additive group $\Bbb
R^m$ whose group space is coordinated by parameters $s^\la$ of the
flows with respect to the basis $\{e_\la=v_\la\}$ for its Lie
algebra. The orbits of the group $\Bbb R^m$ in $Z$ coincide with
the fibers of a fibered manifold $\cF$ (\ref{d20'}). Since vector
fields $v_\la$ are independent everywhere on $U$, the action of
$\Bbb R^m$ in $Z$ is locally free, i.e., isotropy groups of points
of $Z$ are discrete subgroups of the group $\Bbb R^m$. Given a
point $x\in \pi(U)$, the action of $\Bbb R^m$ in the fiber
$M_x=\pi^{-1}(x)$ factorizes as
\mar{d4}\beq
\Bbb R^m\times M_x\to G_x\times M_x\to M_x \label{d4}
\eeq
through the free transitive action in $M_x$ of the factor group
$G_x=\Bbb R^m/K_x$, where $K_x$ is the isotropy group of an
arbitrary point of $M_x$. It is the same group for all points of
$M_x$ because $\Bbb R^m$ is a commutative group. Clearly, $M_x$ is
diffeomorphic to the group space of $G_x$. Since the fibers $M_x$
are mutually diffeomorphic, all isotropy groups $K_x$ are
isomorphic to the group $\Bbb Z_r$ for some fixed $0\leq r\leq m$.
Accordingly, the groups $G_x$ are isomorphic to the additive group
$\Bbb R^{m-r}\times T^r$. This proves statement (I) of Theorem
\ref{nc6}.

(II) Because $\cF$ is a fibered manifold, one can always choose an
open fibered neighborhood $U$ of its fiber $M$ such that $\pi(U)$
is a domain and a fibered manifold
\mar{d20}\beq
\pi:U\to \pi(U)\subset N \label{d20}
\eeq
admits a section $\si$. Let us bring the fibered manifold
(\ref{d20}) into a principal bundle with the structure group
$G_0$, where we denote $\{0\}=\pi(M)$. For this purpose, let us
determine isomorphisms $\rho_x: G_0\to G_x$ of the group $G_0$ to
the groups $G_x$, $x\in \pi(U)$. Then a desired fiberwise action
of $G_0$ in $U$ is defined by the law
\mar{d5}\beq
G_0\times M_x\to\rho_x(G_0)\times M_x\to M_x. \label{d5}
\eeq
Generators of each isotropy subgroup $K_x$ of $\Bbb R^m$ are given
by $r$ linearly independent vectors of the group space $\Bbb R^m$.
One can show that there exist ordered collections of generators
$(v_1(x),\ldots,v_r(x))$ of the groups $K_x$ such that $x\mapsto
v_i(x)$ are smooth $\Bbb R^m$-valued fields on $\pi(U)$. Indeed,
given a vector $v_i(0)$ and a section $\si$ of the fibered
manifold (\ref{d20}), each field $v_i(x)=(s^\al(x))$ is the unique
smooth solution of the equation
\be
g(s^\al)\si(x)=\si(x), \qquad  (s^\al(0))=v_i(0),
\ee
on an open neighborhood of $\{0\}$. Let us consider the
decomposition
\be
v_i(0)=B_i^a(0) e_a + C_i^j(0) e_j, \qquad a=1,\ldots,m-r, \qquad
j=1,\ldots, r,
\ee
where $C_i^j(0)$ is a non-degenerate matrix. Since the fields
$v_i(x)$ are smooth, there exists an open neighborhood of $\{0\}$,
say $\pi(U)$ again, where the matrices $C_i^j(x)$ are
non-degenerate. Then
\mar{d6}\beq
A_x=\left(
\begin{array}{ccc}
\id & \qquad & (B(x)-B(0))C^{-1}(0) \\
0 & & C(x)C^{-1}(0)
\end{array}
\right) \label{d6}
\eeq
is a unique linear morphism of the vector space $\Bbb R^m$ which
transforms the frame $v_\la(0)=\{e_a,v_i(0)\}$ into the frame
$v_\la(x)=\{e_a,v_i(x)\}$. Since it also is an automorphism of the
group $\Bbb R^m$ sending $K_0$ onto $K_x$, we obtain a desired
isomorphism $\rho_x$ of the group $G_0$ to the group $G_x$. Let an
element $g$ of the group $G_0$ be the coset of an element
$g(s^\la)$ of the group $\Bbb R^m$. Then it acts in $M_x$ by the
rule (\ref{d5}) just as the element $g((A_x^{-1})^\la_\bt s^\bt)$
of the group $\Bbb R^m$ does. Since entries of the matrix $A_x$
(\ref{d6}) are smooth functions on $\pi(U)$, this action of the
group $G_0$ in $U$ is smooth. It is free, and $U/G_0=\pi(U)$. Then
the fibered manifold $U\to \pi(U)$ is a trivial principal bundle
with the structure group $G_0$.

(III) Given a section $\si$ of the principal bundle $U\to \pi(U)$,
its trivialization $U=\pi(U)\times G_0$ is defined by assigning
the points $\rho^{-1}(g_x)$ of the group space $G_0$ to the points
$g_x\si(x)$, $g_x\in G_x$, of a fiber $M_x$. Let us endow $G_0$
with the standard coordinate atlas $(r^\la)=(t^a,\vf^i)$ of the
group $\Bbb R^{m-r}\times T^r$. Then we provide $U$ with the
trivialization
\mar{z10}\beq
U=\pi(U)\times(\Bbb R^{m-r}\times T^r)\to\pi(U) \label{z10}
\eeq
with respect to the fiber coordinates $(t^a,\vf^i)$. The vector
fields $v_\la$ on $U$ relative to these coordinates read
\mar{ww25}\beq
v_a=\dr_a, \qquad v_i=-(BC^{-1})^a_i(x)\dr_a +
(C^{-1})_i^k(x)\dr_k.\label{ww25}
\eeq

In order to specify coordinates on the base $\pi(U)$ of the
trivial bundle (\ref{z10}), let us consider the fibered manifold
$S$ (\ref{g106}). It factorizes as
\be
S: U\ar^\pi \pi(U)\ar^{\pi'} S(U), \qquad \pi'=S\circ\si,
\ee
through the fiber bundle $\pi$. The map $\pi'$ also is a fibered
manifold. One can always restrict the domain $\pi(U)$ to a chart
of the fibered manifold $\pi'$, say $\pi(U)$ again. Then
$\pi(U)\to S(U)$ is a trivial bundle $\pi(U)=S(U)\times V$, and so
is $U\to S(U)$. Thus, we have the composite bundle
\mar{z10'}\beq
U=S(U)\times V\times (\Bbb R^{m-r}\times T^r)\to S(U)\times V\to
S(U). \label{z10'}
\eeq
Let us provide its base $S(U)$ with the coordinates $(J_\la)$ such
that
\mar{cmp23}\beq
J_\la\circ S=S_\la. \label{cmp23}
\eeq
Then $\pi(U)$ can be equipped with the bundle coordinates $(J_\la,
x^A)$, $A=1,\ldots, 2(n-m)$, and $(J_\la, x^A, t^a,\vf^i)$ are
coordinates on $U$ (\ref{z10'}). Since fibers of $U\to \pi(U)$ are
isotropic, a symplectic form $\Om$ on $U$ relative to the
coordinates $(J_\la,x^A,r^\la)$ reads
\mar{d23}\beq
\Om=\Om^{\al\bt}dJ_\al\w dJ_\bt + \Om^\al_\bt dJ_\al\w dr^\bt +
\Om_{AB}dx^A\w dx^B +\Om_A^\la dJ_\la\w dx^A +
  \Om_{A\bt} dx^A\w dr^\bt. \label{d23}
\eeq
The Hamiltonian vector fields $v_\la=v_\la^\m\dr_\m$ (\ref{ww25})
obey the relations $v_\la\rfloor\Om=-dJ_\la$ which result in the
coordinate conditions
\mar{ww22}\beq
  \Om^\al_\bt v^\bt_\la=\dl^\al_\la, \qquad \Om_{A\bt}v^\bt_\la=0.
\label{ww22}
\eeq
The first of them shows that $\Om^\al_\bt$ is a non-degenerate
matrix independent of coordinates $y^\la$. Then the second one
implies that $\Om_{A\bt}=0$.

By virtue of the well-known K\"unneth formula for the de Rham
cohomology of manifold products, the closed form $\Om$ (\ref{d23})
is exact, i.e., $\Om=d\Xi$ where the Liouville form $\Xi$ is
\be
\Xi=\Xi^\al(J_\la,x^B,r^\la)dJ_\al + \Xi_i(J_\la,x^B) d\vf^i
+\Xi_A(J_\la,x^B,r^\la)dx^A.
\ee
Since $\Xi_a=0$ and $\Xi_i$ are independent of $\vf^i$, it follows
from the relations
\be
\Om_{A\bt}=\dr_A\Xi_\bt-\dr_\bt\Xi_A=0
\ee
that $\Xi_A$ are independent of coordinates $t^a$ and are at most
affine in $\vf^i$. Since $\vf^i$ are cyclic coordinates, $\Xi_A$
are independent of $\vf^i$. Hence, $\Xi_i$ are  independent of
coordinates $x^A$, and the Liouville form reads
\mar{ac2}\beq
\Xi=\Xi^\al(J_\la,x^B,r^\la)dJ_\al + \Xi_i(J_\la) d\vf^i
+\Xi_A(J_\la,x^B)dx^A. \label{ac2}
\eeq
Because entries $\Om^\al_\bt$ of $d\Xi=\Om$ are independent of
$r^\la$, we obtain the following.

(i) $\Om^\la_i=\dr^\la\Xi_i-\dr_i\Xi^\la$. Consequently,
$\dr_i\Xi^\la$ are independent of $\vf^i$, and so are $\Xi^\la$
since $\vf^i$ are cyclic coordinates. Hence,
$\Om^\la_i=\dr^\la\Xi_i$ and $\dr_i\rfloor\Om=-d\Xi_i$. A glance
at the last equality shows that $\dr_i$ are Hamiltonian vector
fields. It follows that, from the beginning, one can separate $r$
generating functions on $U$, say $S_i$ again, whose Hamiltonian
vector fields are tangent to invariant tori. In this case, the
matrix $B$ in the expressions (\ref{d6}) and (\ref{ww25})
vanishes, and the Hamiltonian vector fields $v_\la$ (\ref{ww25})
read
\mar{ww25'}\beq
v_a=\dr_a, \qquad v_i=(C^{-1})_i^k\dr_k. \label{ww25'}
\eeq
Moreover, the coordinates $t^a$ are exactly the flow parameters
$s^a$. Substituting the expressions (\ref{ww25'}) into the first
condition (\ref{ww22}), we obtain
\be
\Om=\Om^{\al\bt}dJ_\al\w dJ_\bt +dJ_a\w ds^a + C^i_k dJ_i\w d\vf^k
+ \Om_{AB}dx^A\w dx^B +\Om_A^\la dJ_\la\w dx^A.
\ee
It follows that $\Xi_i$ are independent of $J_a$, and so are
$C^k_i=\dr^k\Xi_i$.

(ii) $\Om^\la_a=-\dr_a\Xi^\la=\dl^\la_a$. Hence,
$\Xi^a=-s^a+E^a(J_\la)$ and $\Xi^i=E^i(J_\la,x^B)$ are independent
of $s^a$.

In view of items (i) -- (ii), the Liouville form $\Xi$ (\ref{ac2})
reads
\be
\Xi=(-s^a+E^a(J_\la,x^B))dJ_a + E^i(J_\la,x^B) dJ_i + \Xi_i(J_j)
d\vf^i + \Xi_A(J_\la,x^B)dx^A.
\ee
Since the matrix $\dr^k\Xi_i$ is non-degenerate, we can perform
the coordinate transformations
\mar{cmp25}\beq
I_a=J_a, \quad I_i=\Xi_i(J_j), \quad r'^a=-s^a+E^a(J_\la,x^B),
\quad r'^i=\vf^i-E^j(J_\la,x^B)\frac{\dr J_j}{\dr I_i}.
\label{cmp25}
\eeq
These transformations bring $\Om$ into the form
\mar{d26}\beq
\Om= dI_\la\w d r'^\la +\Om_{AB}(I_\m,x^C) dx^A\w dx^B +
\Om_A^\la(I_\m,x^C) dI_\la\w dx^A. \label{d26}
\eeq
Since functions $I_\la$ are in involution and their Hamiltonian
vector fields $\dr_\la$ mutually commute, a point $z\in M$ has an
open neigbourhood $U_z=\pi(U_z)\times O_z$, $O_z\subset \Bbb
R^{m-r}\times T^r,$ endowed with local Darboux coordinates
$(I_\la,p_s,q^s, y^\la)$, $s=1,\ldots,n-m$, such that the
symplectic form $\Om$ (\ref{d26}) is given by the expression
\mar{dd12}\beq
\Om= dI_\la\w d y^\la + dp_s\w dq^s. \label{dd12}
\eeq
Here, $y^\la(I_\la,x^A,r'^\al)$ are local functions
\mar{dd11}\beq
y^\la=r'^\la + f^\la(I_\la,x^A) \label{dd11}
\eeq
on $U_z$. With the group $G$, one can extend these functions to
the open neighborhood
\be
\pi(U_z)\times \Bbb R^{k-m}\times T^m
\ee
of $M$, say $U$ again, by the law
\be
y^\la(I_\la,x^A,G(z)^\al)= G(z)^\la + f^\la(I_\la,x^A).
\ee
Substituting the functions (\ref{dd11}) on $U$ into the expression
(\ref{d26}), one brings the symplectic form $\Om$ into the
canonical form (\ref{dd12}) on $U$.

\section{Globally superintegrable systems}

To study a superintegrable system, one conventionally considers it
with respect to generalized action-angle coordinates. A problem is
that, restricted to an action-angle coordinate chart on an open
subbundle $U$ of the fibered manifold $Z\to N$ (\ref{nc4}), a
superintegrable system becomes different from the original one
since there is no morphism of the Poisson algebra $C^\infty(U)$ on
$(U,\Om)$ to that $C^\infty(Z)$ on $(Z,\Om)$. Moreover, a
superitegrable system on $U$ need not satisfy the conditions of
Theorem \ref{nc0'} because it may happen that the Hamiltonian
vector fields of the generating functions on $U$ are not complete.
To describe superintegrable systems in terms of generalized
action-angle coordinates, we therefore follow the notion of a
globally superintegrable system.

\begin{defi} \mar{cmp30} \label{cmp30} A superintegrable systems
$F$ (\ref{i00}) on a symplectic manifold $(Z,\Om)$ by Definition
\ref{i0} is called globally superintegrable if there exist global
generalized action-angle coordinates
\mar{cmp31}\beq
(I_\la, x^A, y^\la), \qquad \la=1,\ldots,m, \qquad A=1,\ldots,
2(n-m), \label{cmp31}
\eeq
such that: (i) the action coordinates $(I_\la)$ are expressed into
the values of some Casimir functions $C_\la$ on the Poisson
manifold $(N,\{,\}_N)$, (ii) the angle coordinates $(y^\la)$ are
coordinates on the toroidal cylinder (\ref{g120}), and (iii) the
symplectic form $\Om$ on $Z$ reads
\mar{cmp32}\beq
\Om= dI_\la\w d y^\la +\Om_{AB}(I_\m,x^C) dx^A\w dx^B.
\label{cmp32}
\eeq
\end{defi}

It is readily observed that the semi-local generalized
action-angle coordinates on $U$ in Theorem \ref{nc0'} are global
in accordance with Definition \ref{cmp30}.

Forthcoming Theorem \ref{cmp34} provides the sufficient conditions
of the existence of global generalized action-angle coordinates of
a superintegrable system on a symplectic manifold $(Z,\Om)$
\cite{jmp07}.

\begin{theo} \label{cmp34} \mar{cmp34} A superintegrable system $F$ on
a symplectic manifold $(Z,\Om)$ is globally superintegrable if the
following conditions hold.

(i) Hamiltonian vector fields $\vt_i$ of the generating functions
$F_i$ are complete.

(ii) The fibered manifold $F$ (\ref{nc4}) is a fiber bundle with
connected fibers.

(iii) Its base $N$ is simply connected and the cohomology
$H^2(N,\Bbb Z)$ is trivial

(iv) The coinduced Poisson structure $\{,\}_N$ on a base $N$
admits $m$ independent Casimir functions $C_\la$.
\end{theo}

\begin{proof} Theorem \ref{cmp34} is a corollary of
Theorem \ref{cmp35} below which is a global generalization of
Theorem \ref{nc6}. In accordance with Theorem \ref{cmp35}, we have
a composite fibered manifold
\mar{g150}\beq
Z\ar^F N\ar^C W, \label{g150}
\eeq
where $C:N\to W$ is a fibered manifold of level surfaces of the
Casimir functions $C_\la$ (which coincides with the symplectic
foliation of a Poisson manifold $N$). The composite fibered
manifold (\ref{g150}) is provided with the adapted fibered
coordinates $(J_\la, x^A, r^\la)$ (\ref{g108}), where $J_\la$ are
values of independent Casimir functions and $(r^\la)=(t^a,\vf^i)$
are coordinates on a toroidal cylinder. Since $C_\la=J_\la$ are
Casimir functions on $N$, the symplectic form $\Om$ (\ref{g103})
on $Z$ reads
\mar{g141}\beq
\Om=\Om^\al_\bt dJ_\al\w r^\bt + \Om_{\al A}dr^\al\w dx^A +
\Om_{AB} dx^A\w dx^B. \label{g141}
\eeq
In particular, it follows that transition functions of coordinates
$x^A$ on $N$ are independent of coordinates $J_\la$, i.e., $C:V\to
W$ is a trivial bundle. By virtue of Lemma \ref{g144} below, the
symplectic form (\ref{g141}) is exact, i.e., $\Om=d\Xi$, where the
Liouville form $\Xi$ (\ref{g113}) is
\be
\Xi=\Xi^\la(J_\al,r^\al)dJ_\la + \Xi_i(J_\al) d\vf^i
+\Xi_A(x^B)dx^A.
\ee
Then the coordinate transformations (\ref{g142}):
\mar{g151}\beq
I_a=J_a, \quad I_i=\Xi_i(J_j), \quad y^a = -\Xi^a=t^a-E^a(J_\la),
\quad y^i =\vf^i-\Xi^j(J_\la)\frac{\dr J_j}{\dr I_i}, \label{g151}
\eeq
bring $\Om$ (\ref{g141}) into the form (\ref{cmp32}). In
comparison with the general case (\ref{g142}), the coordinate
transformations (\ref{g151}) are independent of coordinates $x^A$.
Therefore, the angle coordinates $y^i$ possess identity transition
functions on $N$.
\end{proof}

Theorem \ref{cmp34} restarts Theorem \ref{nc0'} if one considers
an open subset $V$ of $N$ admitting the Darboux coordinates $x^A$
on the symplectic leaves of $U$.

Note that, if invariant submanifolds of a superintegrable system
are assumed to be connected and compact, condition (i) of Theorem
\ref{cmp34} is unnecessary since vector fields $v_\la$ on compact
fibers of $F$ are complete. Condition (ii) also holds by virtue of
Proposition \ref{cmp15}. In this case, Theorem \ref{cmp34}
reproduces the well known result in \cite{daz}.

If  $F$ in Theorem \ref{cmp34} is a completely integrable system,
the coinduced Poisson structure on $N$ equals zero, the generating
functions $F_i$ are the pull-back of $n$ independent functions on
$N$, and Theorem \ref{cmp34} coincides with Theorem 4 in
\cite{jmp07}.

Turn now to the above mentioned Theorem \ref{cmp35}.

\begin{theo} \label{cmp35} \mar{cmp35}
Let a partially integrable system $\{S_1,\ldots,S_m\}$ on a
symplectic manifold $(Z,\Om)$ satisfy the following conditions.

(i) The Hamiltonian vector fields $v_\la$ of $S_\la$ are complete.

(ii) The foliation $\cF$ is a fiber bundle
\mar{cmp40}\beq
\pi:Z\to N. \label{cmp40}
\eeq

(iii) Its base $N$ is simply connected and the cohomology
$H^2(N,\Bbb Z)$ is trivial.

\noindent Then the following hold.

(I) The fiber bundle $\cF$ is a trivial principal bundle with the
structure group (\ref{g120}), and we have a composite fibered
manifold
\mar{g107}\beq
S=\zeta\circ\pi: Z\ar N\ar W, \label{g107}
\eeq
where $N\to W$ however need not be a fiber bundle.

(II) The fibered manifold (\ref{g107}) is provided with the
fibered generalized action-angle coordinates
\be
(I_\la,x^A,y^\la)\to (I_\la,x^A)\to (I_\la), \qquad
\la=1,\ldots,m, \quad A=1,\ldots, 2(n-m),
\ee
such that: (i) the action coordinates $(I_\la)$ (\ref{g142}) are
expressed into the values of the functions $(S_\la)$ and they
possess identity transition functions, (ii) the angle coordinates
$(y^\la)$ (\ref{g142}) are coordinates on a toroidal cylinder,
(iii) the symplectic form $\Om$ on $U$ reads
\mar{nc3'}\beq
\Om= dI_\la\w dy^\la + \Om_A^\la dI_\la\w dx^A+ \Om_{AB} dx^A\w
dx^B. \label{nc3'}
\eeq
\end{theo}

\begin{proof}
See Section 5 for the proof.
\end{proof}

It follows from the proof of Theorem \ref{cmp35} that its
condition (iii) and, accordingly, condition (iii) of Theorem
\ref{cmp34} guarantee that fiber bundles $F$ in conditions (ii) of
these theorems are trivial. Therefore, Theorem \ref{cmp34} can be
reformulated as follows.

\begin{theo} \mar{cmp36} \label{cmp36} A superintegrable system $F$ on
a symplectic manifold $(Z,\Om)$ is globally superintegrable iff
the following conditions hold.

(i) The fibered manifold $F$ (\ref{nc4}) is a trivial fiber
bundle.

(ii) The coinduced Poisson structure $\{,\}_N$ on a base $N$
admits $m$ independent Casimir functions $C_\la$ such that
Hamiltonian vector fields of their pull-back $F^*C_\la$ are
complete.
\end{theo}

\begin{rem} \label{zz96} \mar{zz96}
It follows from Remark \ref{zz95} and condition (ii) of Theorem
\ref{cmp36} that a Hamiltonian vector field of the the pull-back
$F^*C$ of any Casimir function $C$ on a Poisson manifold $N$ is
complete.
\end{rem}

\section{Proof of Theorem \ref{cmp35}}

Following part (I) of the proof of Theorem \ref{nc6}, one can show
that a typical fiber of the fiber bundle (\ref{cmp40}) is the
toroidal cylinder (\ref{g120}).

Let us bring the fiber bundle (\ref{cmp40}) into a principal
bundle with the structure group (\ref{g120}). Generators of each
isotropy subgroup $K_x$ of $\Bbb R^m$ are given by $r$ linearly
independent vectors $u_i(x)$ of the group space $\Bbb R^m$. These
vectors are assembled into an $r$-fold covering $K\to N$. This is
a subbundle of the trivial bundle
\mar{g101}\beq
N\times R^m\to N \label{g101}
\eeq
whose local sections are local smooth sections of the fiber bundle
(\ref{g101}). Such a section over an open neighborhood of a point
$x\in N$ is given by a unique local solution $s^\la(x')e_\la$ of
the equation
\be
g(s^\la)\si(x')=\exp(s^\la v_\la)\si(x')=\si(x'), \qquad
s^\la(x)e_\la=u_i(x),
\ee
where $\si$ is an arbitrary local section of the fiber bundle
$Z\to N$ over an open neighborhood of $x$. Since $N$ is simply
connected, the covering $K\to N$ admits $r$ everywhere different
global sections $u_i$ which are global smooth sections
$u_i(x)=u^\la_i(x)e_\la$ of the fiber bundle (\ref{g101}). Let us
fix a point of $N$ further denoted by $\{0\}$. One can determine
linear combinations of the functions $S_\la$, say again $S_\la$,
such that $u_i(0)=e_i$, $i=m-r,\ldots,m$, and the group $G_0$ is
identified to the group $\Bbb R^{m-r}\times T^r$. Let $E_x$ denote
the $r$-dimensional subspace of $\Bbb R^m$ passing through the
points $u_1(x),\ldots,u_r(x)$. The spaces $E_x$, $x\in N$,
constitute an $r$-dimensional subbundle $E\to N$ of the trivial
bundle (\ref{g101}). Moreover, the latter is split into the
Whitney sum of vector bundles $E\oplus E'$, where $E'_x=\Bbb
R^m/E_x$ \cite{hir}. Then there is a global smooth section $\g$ of
the trivial principal bundle $N\times GL(m,\Bbb R)\to N$ such that
$\g(x)$ is a morphism of $E_0$ onto $E_x$, where
$u_i(x)=\g(x)(e_i)=\g_i^\la e_\la$. This morphism also is an
automorphism of the group $\Bbb R^m$ sending $K_0$ onto $K_x$.
Therefore, it provides a group isomorphism $\rho_x: G_0\to G_x$.
With these isomorphisms, one can define the fiberwise action of
the group $G_0$ on $Z$ given by the law
\mar{d5'}\beq
G_0\times M_x\to\rho_x(G_0)\times M_x\to M_x. \label{d5'}
\eeq
Namely, let an element of the group $G_0$ be the coset
$g(s^\la)/K_0$ of an element $g(s^\la)$ of the group $\Bbb R^m$.
Then it acts on $M_x$ by the rule (\ref{d5'}) just as the coset
$g((\g(x)^{-1})^\la_\bt s^\bt)/K_x$ of an element
$g((\g(x)^{-1})^\la_\bt s^\bt)$ of $\Bbb R^m$ does. Since entries
of the matrix $\g$ are smooth functions on $N$, the action
(\ref{d5'}) of the group $G_0$ on $Z$ is smooth. It is free, and
$Z/G_0=N$. Thus, $Z\to N$ (\ref{cmp40}) is a principal bundle with
the structure group $G_0=\Bbb R^{m-r}\times T^r$.

Furthermore, this principal bundle over a paracompact smooth
manifold $N$ is trivial as follows. In accordance with the
well-known theorem \cite{hir}, its structure group $G_0$
(\ref{g120}) is reducible to the maximal compact subgroup $T^r$,
which also is the maximal compact subgroup of the group product
$\op\times^rGL(1,\Bbb C)$. Therefore, the equivalence classes of
$T^r$-principal bundles $\xi$ are defined as
\be
c(\xi)=c(\xi_1\oplus\cdots\oplus \xi_r)=(1+c_1(\xi_1))\cdots
(1+c_1(\xi_r))
\ee
by the Chern classes $c_1(\xi_i)\in H^2(N,\Bbb Z)$ of
$U(1)$-principal bundles $\xi_i$ over $N$ \cite{hir}. Since the
cohomology group $H^2(N,\Bbb Z)$ of $N$ is trivial, all Chern
classes $c_1$ are trivial, and the principal bundle $Z\to N$ over
a contractible base also is trivial. This principal bundle can be
provided with the following coordinate atlas.

Let us consider the fibered manifold $S:Z\to W$ (\ref{g106}).
Because functions $S_\la$ are constant on fibers of the fiber
bundle $Z\to N$ (\ref{cmp40}), the fibered manifold (\ref{g106})
factorizes through the fiber bundle (\ref{cmp40}), and we have the
composite fibered manifold (\ref{g107}). Let us provide the
principal bundle $Z\to N$ with a trivialization
\mar{g110}\beq
Z=N\times \Bbb R^{m-r}\times T^r\to N, \label{g110}
\eeq
whose fibers are endowed with the standard coordinates
$(r^\la)=(t^a,\vf^i)$ on the toroidal cylinder (\ref{g120}). Then
the composite fibered manifold (\ref{g107}) is provided with the
fibered coordinates
\mar{g108}\ben
&& (J_\la,x^A,t^a,\vf^i), \label{g108}\\
&& \la=1,\ldots, m, \quad A=1, \ldots, 2(n-m), \quad a=1, \ldots,
m-r, \quad i=1,\ldots, r, \nonumber
\een
where $J_\la$ (\ref{cmp23}) are coordinates on the base $W$
induced by Cartesian coordinates on $\Bbb R^m$, and $(J_\la, x^A)$
are fibered coordinates on the fibered manifold $\zeta:N\to W$.
The coordinates $J_\la$ on $W\subset \Bbb R^m$ and the coordinates
$(t^a,\vf^i)$ on the trivial bundle (\ref{g110}) possess the
identity transition functions, while the transition function of
coordinates $(x^A)$ depends on the coordinates $(J_\la)$ in
general.

The Hamiltonian vector fields $v_\la$ on $Z$ relative to the
coordinates (\ref{g108}) take the form
\mar{ww25a}\beq
v_\la=v_\la^a(x)\dr_a + v^i_\la(x)\dr_i. \label{ww25a}
\eeq
Since these vector fields commute (i.e., fibers of $Z\to N$ are
isotropic), the symplectic form $\Om$ on $Z$ reads
\mar{g103}\beq
\Om=\Om^\al_\bt dJ_\al\w dr^\bt + \Om_{\al A}dr^\al\w dx^A +
\Om^{\al\bt} dJ_\al\w dJ_\bt + \Om^\al_A d J_\al\w dx^A +\Om_{AB}
dx^A\w dx^B. \label{g103}
\eeq

\begin{lem} \label{g144} \mar{g144}
The symplectic form $\Om$ (\ref{g103}) is exact.
\end{lem}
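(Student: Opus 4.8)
The plan is to show that the closed $2$-form $\Om$ (\ref{g103}) is exact by exhibiting the de Rham cohomology class $[\Om]\in H^2(Z,\Bbb R)$ as zero, using the explicit product structure of $Z$ established earlier in this section together with the cohomological hypotheses on $N$. First I would recall that, by the preceding paragraphs, the principal bundle $Z\to N$ with structure group $\Bbb R^{m-r}\times T^r$ is trivial, so we have a diffeomorphism $Z\cong N\times\Bbb R^{m-r}\times T^r$, and moreover $N$ is simply connected with $H^2(N,\Bbb Z)=0$. Since the manifolds in question are paracompact, $H^2(N,\Bbb Z)=0$ together with simple connectedness and the universal coefficient theorem give $H^2(N,\Bbb R)=0$ and $H^1(N,\Bbb R)=0$. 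The factor $\Bbb R^{m-r}$ is contractible, hence $Z$ is homotopy equivalent to $N\times T^r$.

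The key step is then the computation of $H^2(N\times T^r,\Bbb R)$ via the K\"unneth formula for de Rham cohomology of a product of manifolds:
\be
H^2(N\times T^r,\Bbb R)=\bigoplus_{p+q=2}H^p(N,\Bbb R)\ot H^q(T^r,\Bbb R).
\ee
The terms with $p=2$ and $p=1$ vanish because $H^2(N,\Bbb R)=0$ and $H^1(N,\Bbb R)=0$. Hence only the $p=0$ term survives, giving $H^2(N\times T^r,\Bbb R)\cong H^2(T^r,\Bbb R)$, which is spanned by the classes $[d\vf^i\w d\vf^j]$. Therefore it remains to check that the component of $[\Om]$ along each $d\vf^i\w d\vf^j$ is zero; equivalently, that the pull-back of $\Om$ to any torus fiber $\{x\}\times\{t\}\times T^r$ vanishes in cohomology. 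But the fibers of $Z\to N$ are isotropic for $\Om$ (the Hamiltonian vector fields $v_\la$ are mutually commuting and span the tangent spaces to the fibers), so the restriction of $\Om$ to each such fiber is identically zero, and a fortiori its cohomology class there is zero. Consequently $[\Om]=0$ in $H^2(Z,\Bbb R)$, i.e. $\Om$ is exact.

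The main obstacle is bookkeeping rather than anything deep: one must be careful that the relevant K\"unneth decomposition is the one for the homotopy type $N\times T^r$ and not naively for $N\times\Bbb R^{m-r}\times T^r$ (the contractible factor is harmless but must be acknowledged), and one must justify passing from the triviality hypothesis $H^2(N,\Bbb Z)=0$ plus $\pi_1(N)=0$ to the vanishing of the real cohomology groups $H^1(N,\Bbb R)$ and $H^2(N,\Bbb R)$ that actually enter the K\"unneth sum. The isotropy of the fibers, which kills the last possible K\"unneth component, is exactly condition (ii) of Proposition \ref{nc8} and was already used to write $\Om$ in the form (\ref{g103}); invoking it again here is what forces the surviving torus component of $[\Om]$ to vanish. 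Once these points are in place, exactness of $\Om$ and hence the existence of a Liouville form $\Xi$ with $\Om=d\Xi$ follows immediately, and the explicit shape of $\Xi$ quoted after the lemma is obtained by integrating along the product directions.
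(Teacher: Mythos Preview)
Your proposal is correct and follows essentially the same argument as the paper: both compute $H^2(Z)$ via the K\"unneth formula for the product $N\times\Bbb R^{m-r}\times T^r$, use $\pi_1(N)=0$ and $H^2(N,\Bbb Z)=0$ to kill the $H^1(N)$ and $H^2(N)$ summands, and then observe that the surviving $H^2(T^r)$ component of $[\Om]$ vanishes because the form (\ref{g103}) has no $d\vf^i\w d\vf^j$ terms---which, as you note, is exactly the isotropy of the fibers. Your treatment is in fact slightly more explicit than the paper's about why the absence of those terms forces the class to vanish (restriction to a torus fiber realizes the K\"unneth projection), and about the passage from integral to real coefficients; the paper records these steps more tersely but the logic is identical.
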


\begin{proof} In accordance with the well-known K\"unneth formula,
the de Rham cohomology group of the product (\ref{g110}) reads
\be
H^2(Z)=H^2(N)\oplus H^1(N)\ot H^1(T^r) \oplus H^2(T^r).
\ee
By the de Rham theorem \cite{hir}, the de Rham cohomology $H^2(N)$
is isomorphic to the cohomology $H^2(N,\Bbb R)$ of $N$ with
coefficients in the constant sheaf $\Bbb R$. It is trivial since
$H^2(N,\Bbb R)=H^2(N,\Bbb Z)\ot\Bbb R$ where $H^2(N,\Bbb Z)$ is
trivial. The first cohomology group $H^1(N)$ of $N$ is trivial
because $N$ is simply connected. Consequently, $H^2(Z)=H^2(T^r)$.
Then the closed form $\Om$ (\ref{g103}) is exact since it does not
contain the term $\Om_{ij}d\vf^i\w d\vf^j$.
\end{proof}

Thus, we can write
\mar{g113}\beq
\Om=d\Xi, \qquad \Xi=\Xi^\la(J_\al,x^B,r^\al) dJ_\la +
\Xi_\la(J_\al,x^B) dr^\la +\Xi_A(J_\al,x^B,r^\al) dx^A.
\label{g113}
\eeq
Up to an exact summand, the Liouville form $\Xi$ (\ref{g113}) is
brought into the form
\be
\Xi=\Xi^\la(J_\al,x^B,r^\al) dJ_\la + \Xi_i(J_\al,x^B) d\vf^i
+\Xi_A(J_\al,x^B,r^\al) dx^A,
\ee
i.e., it does not contain the term $\Xi_a dt^a$.

The Hamiltonian vector fields $v_\la$ (\ref{ww25a}) obey the
relations $v_\la\rfloor\Om=-dJ_\la$, which result in the
coordinate conditions (\ref{ww22}). Then following the proof of
Theorem \ref{nc6}, we can show that a symplectic form $\Om$ on $Z$
is given by the expression (\ref{nc3'}) with respect to the
coordinates
\mar{g142}\beq
I_a=J_a, \quad I_i=\Xi_i(J_j), \quad y^a =
-\Xi^a=t^a-E^a(J_\la,x^B), \quad y^i
=\vf^i-\Xi^j(J_\la,x^B)\frac{\dr J_j}{\dr I_i}. \label{g142}
\eeq

\section{Superintegrable Hamiltonian systems}

In autonomous (symplectic) Hamiltonian mechanics, one considers
superintegrable systems whose generating functions are integrals
of motion, i.e., they are in involution with a Hamiltonian $\cH$,
and the functions $(\cH,F_1,\ldots,F_k)$ are nowhere independent,
i.e.,
\mar{cmp11,'}\ben
&&\{\cH, F_i\}=0, \label{cmp11}\\
&& dH\w(\op\w^kdF_i)=0. \label{cmp11'}
\een.

In order that an evolution of Hamiltonian system can be defined at
any instant $t\in\Bbb R$, one supposes that the Hamiltonian vector
field of its Hamiltonian is complete. By virtue of Remark
\ref{zz96} and forthcoming Proposition \ref{cmp12}, a Hamiltonian
of a superintegrable system always satisfies this condition.

\begin{prop} \label{cmp12} \mar{cmp12}
It follows from the equality (\ref{cmp11'}) that a Hamiltonian
$\cH$ is constant on the invariant submanifolds. Therefore, it is
the pull-back of a function on $N$ which is a Casimir function of
the Poisson structure (\ref{cmp1}) because of the conditions
(\ref{cmp11}).
\end{prop}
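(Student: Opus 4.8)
The plan is to prove the two claims in turn, each reducing to elementary linear algebra once the structures of Section 2 are used. \emph{First, $\cH$ is constant on invariant submanifolds.} I would argue pointwise. At every $z\in Z$ the covectors $dF_1(z),\dots,dF_k(z)$ are linearly independent by item (i) of Definition \ref{i0}. The standard exterior-algebra fact that a $1$-form $\al$ satisfies $\al\w(\bt_1\w\cdots\w\bt_k)=0$, with the $\bt_j$ independent, exactly when $\al$ lies in the linear span of the $\bt_j$, applied to the hypothesis (\ref{cmp11'}), shows that $d\cH(z)$ is a linear combination of $dF_1(z),\dots,dF_k(z)$. The invariant submanifold (fiber of $F$) $M$ through $z$ has tangent space $T_zM=\bigcap_i\Ker dF_i(z)$, so $d\cH$ annihilates $T_zM$; hence $d(\cH|_M)=0$ and $\cH$ is constant on the connected submanifold $M$. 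Since this holds for every fiber and $F\colon Z\to N$ is a surjective submersion, $\cH$ factors uniquely as $\cH=F^*h=h\circ F$ for a smooth function $h$ on $N$.

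\emph{Second, $h$ is a Casimir function of the coinduced structure.} Here I would invoke Proposition \ref{nc7}: $F$ is a Poisson morphism onto $(N,\{,\}_N)$, so $\{F^*f,F^*g\}=F^*\{f,g\}_N$. Writing $F_i=F^*x_i$, the involutivity condition (\ref{cmp11}) gives $0=\{\cH,F_i\}=\{F^*h,F^*x_i\}=F^*\{h,x_i\}_N$ for every $i$. As $F$ is surjective, $F^*$ is injective, whence $\{h,x_i\}_N=0$ on $N$ for all $i$. Because $(x_i)$ are global coordinates on $N$, the differentials $dx_i$ span $T^*N$ at each point, so $w(dh,dx_i)=0$ for all $i$ --- with $w$ the bivector (\ref{cmp1}) --- forces the Hamiltonian vector field $w^\sh(dh)$ of $h$ to vanish identically. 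Thus $h$ is a Casimir function of (\ref{cmp1}), as claimed.

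I do not anticipate a real obstacle; both steps are routine given Proposition \ref{nc7}. The only point deserving care is the factorization $\cH=F^*h$ in the first step, which requires the fibers of $F$ to be connected --- the standing hypothesis of Theorems \ref{nc0} and \ref{nc0'}, where this proposition is applied --- or, failing that, that one works on a saturated neighbourhood of a single invariant submanifold as in Remark \ref{cmp8}; on a disconnected fiber the argument yields only that $\cH$ is locally constant along it. I would state the proposition under that connectedness convention.
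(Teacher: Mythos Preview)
Your proof is correct and follows precisely the logic the paper itself sketches: the paper gives no separate proof for this proposition, but embeds the argument in the statement---``It follows from (\ref{cmp11'}) that $\cH$ is constant on the invariant submanifolds\ldots\ which is a Casimir function\ldots\ because of (\ref{cmp11})''---and you have simply supplied the details of that two-step reasoning (linear dependence of $d\cH$ on the $dF_i$, then the Poisson-morphism property of Proposition \ref{nc7}). Your remark on the connectedness of fibers is apt and not spelled out in the paper; it is indeed a standing hypothesis in the theorems where the proposition is invoked.
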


Proposition \ref{cmp12} leads to the following.

\begin{prop} \label{zz1} \mar{zz1}
Let $\cH$ be a Hamiltonian of a globally superintegrable system
provided with the generalized action-angle coordinates $(I_\la,
x^A, y^\la)$ (\ref{cmp31}). Then a Hamiltonian $\cH$ depends only
on the action coordinates $I_\la$. Consequently, the equations of
motion of a globally superintergable system take the form
\be \dot
y^\la=\frac{\dr \cH}{\dr I_\la}, \qquad I_\la={\rm const.}, \qquad
x^A={\rm const.}
\ee
\end{prop}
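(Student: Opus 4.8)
The plan is to combine Proposition \ref{cmp12} with the structure of the global action-angle coordinates supplied by Definition \ref{cmp30}. First I would recall from Proposition \ref{cmp12} that $\cH$, being constant on the invariant submanifolds (the fibers of $F$), is the pull-back $F^*h$ of a function $h$ on $N$, and moreover that $h$ is a Casimir function of the coinduced Poisson structure $\{,\}_N$ on $N$. The task then reduces to showing that a Casimir function, expressed in the action-angle coordinates, is a function of the action coordinates $I_\la$ alone.

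Next I would use condition (i) of Definition \ref{cmp30}: the action coordinates $(I_\la)$ are themselves values of $m$ independent Casimir functions $C_\la$ on $(N,\{,\}_N)$. Since the coinduced Poisson structure has corank $m$ (Definition \ref{i0}(iii) together with Proposition \ref{nc7}), these $m$ independent Casimir functions generate, locally, the full algebra of Casimir functions; equivalently, the $C_\la$ are constant on the symplectic leaves of $N$ and their common level sets are exactly those leaves. Hence any Casimir function $h$ on $N$ is (locally, and by condition (iii)-type simple-connectivity/triviality hypotheses globally on the relevant chart) a function of $C_1,\ldots,C_m$ only. Pulling back, $\cH=F^*h$ is a function of $F^*C_\la$, i.e.\ of the $I_\la$, and is independent of $x^A$ and $y^\la$.

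Having established $\cH=\cH(I_\la)$, I would then write down Hamilton's equations using the explicit symplectic form $\Om=dI_\la\w dy^\la+\Om_{AB}(I_\m,x^C)\,dx^A\w dx^B$ of (\ref{cmp32}). From $d\cH=(\dr\cH/\dr I_\la)\,dI_\la$ and the Hamiltonian vector field equation $u_\cH\rfloor\Om=-d\cH$, the term $dI_\la\w dy^\la$ forces $\dot y^\la=\dr\cH/\dr I_\la$, while the absence of $y^\la$ and $x^A$ in $\cH$ forces the $dI_\la$ and $dx^A$ components of $u_\cH$ to vanish, giving $\dot I_\la=0$ and $\dot x^A=0$. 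This is a routine contraction computation with the given two-form.

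The main obstacle is the step asserting that every Casimir function of $\{,\}_N$ is a function of the $m$ chosen independent Casimirs $C_\la$ — that is, that $m$ independent Casimirs really do functionally generate all of them. Locally this is immediate from the constant-corank hypothesis (the symplectic leaves are the common level sets of the $C_\la$, and a Casimir is exactly a function constant on leaves), but one should be slightly careful globally; here, however, we are working on the chart carrying the global action-angle coordinates, where $C:N\to W$ is (by the proof of Theorem \ref{cmp34}) essentially a trivial fibration, so the functional dependence is genuinely global and no monodromy obstruction arises. Everything else is a direct unwinding of definitions and a one-line symplectic calculation.
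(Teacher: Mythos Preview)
Your proposal is correct and follows the same line the paper intends: the paper presents Proposition \ref{zz1} as an immediate consequence of Proposition \ref{cmp12} with no separate proof, and your argument is exactly the natural unpacking of that implication --- use Proposition \ref{cmp12} to identify $\cH$ with the pull-back of a Casimir on $N$, invoke Definition \ref{cmp30}(i) and the constant-corank hypothesis to conclude that this Casimir depends only on the $I_\la$, and then read off Hamilton's equations from the block form (\ref{cmp32}). Your flagged ``main obstacle'' (that any Casimir is functionally dependent on the $m$ chosen ones) is handled correctly via the constant-corank condition and the triviality of $C:N\to W$ established in the proof of Theorem \ref{cmp34}; the paper simply leaves all of this implicit.
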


Following the original Mishchenko--Fomenko theorem, let us mention
superintegrable systems whose generating functions
$\{F_1,\ldots,F_k\}$ form a $k$-dimensional real Lie algebra $\cG$
of corank $m$ with the commutation relations
\mar{zz60}\beq
\{F_i,F_j\}= c_{ij}^h F_h, \qquad c_{ij}^h={\rm const.}
\label{zz60}
\eeq
Then $F$ (\ref{nc4}) is a momentum mapping of $Z$ to the Lie
coalgebra $\cG^*$ provided with the coordinates $x_i$ in item (i)
of Definition \ref{i0} \cite{book05,guil}. In this case, the
coinduced Poisson structure $\{,\}_N$ coincides with the canonical
Lie--Poisson structure on $\cG^*$ given by the Poisson bivector
field
\be
w=\frac12 c_{ij}^h x_h\dr^i\w\dr^j.
\ee
Let $V$ be an open subset of $\cG^*$ such that conditions (i) and
(ii) of Theorem \ref{cmp36} are satisfied. Then an open subset
$F^{-1}(V)\subset Z$ is provided with the generalized action-angle
coordinates.

\begin{rem}
Let Hamiltonian vector fields $\vt_i$ of the generating functions
$F_i$ which form a Lie algebra $\cG$ be complete. Then they define
a locally free Hamiltonian action on $Z$ of some simply connected
Lie group $G$ whose Lie algebra is isomorphic to $\cG$
\cite{onish,palais}. Orbits of $G$ coincide with $k$-dimensional
maximal integral manifolds of the regular distribution $\cV$ on
$Z$ spanned by Hamiltonian vector fields $\vt_i$ \cite{susm}.
Furthermore, Casimir functions of the Lie--Poisson structure on
$\cG^*$ are exactly the coadjoint invariant functions on $\cG^*$.
They are constant on orbits of the coadjoint action of $G$ on
$\cG^*$ which coincide with leaves of the symplectic foliation of
$\cG^*$.
\end{rem}

\begin{theo} \label{zz34} \mar{zz34} Let a globally superintegrable Hamiltonian system
on a symplectic manifold $Z$ obey the following conditions.

(i) It is maximally superintegrable.

(ii) Its Hamiltonian $H$ is regular, i.e, $dH$ nowhere vanishes.

(iii) Its generating functions $F_i$ constitute a finite
dimensional real Lie algebra and their Hamiltonian vector fields
are complete.

\noindent Then any integral of motion of this Hamiltonian system
is the pull-back of a function on a base $N$ of the fibration $F$
(\ref{nc4}). In other words, it is expressed into the integrals of
motion $F_i$.
\end{theo}

\begin{proof}
The proof is based on the following. A Hamiltonian vector field of
a function $f$ on $Z$ lives in the one-codimensional regular
distribution $\cV$ on $Z$ spanned by Hamiltonian vector fields
$\vt_i$ iff $f$ is the pull-back of a function on a base $N$ of
the fibration $F$ (\ref{nc4}). A Hamiltonian $H$ brings $Z$ into a
fibered manifold of its level surfaces whose vertical tangent
bundle coincide with $\cV$. Therefore, a Hamiltonian vector field
of any integral of motion of $H$ lives in $\cV$.
\end{proof}

It may happen that, given a Hamiltonian $\cH$ of a Hamiltonian
system on a symplectic manifold $Z$, we have different
superintegrable Hamiltonian systems on different open subsets of
$Z$. For instance, this is the case of the Kepler system.

\section{Kepler system}

We consider the Kepler system on a plane $\Bbb R^2$. Its phase
space is $T^*\Bbb R^2=\Bbb R^4$ provided with the Cartesian
coordinates $(q_i,p_i)$, $i=1,2$, and the canonical symplectic
form
\mar{zz43}\beq
\Om=\op\sum_idp_i\w dq_i. \label{zz43}
\eeq
Let us denote
\be
p=(\op\sum_i(p_i)^2)^{1/2}, \qquad r=(\op\sum_i(q^i)^2)^{1/2},
\qquad (p,q)=\op\sum_ip_iq_i.
\ee
A Hamiltonian of the Kepler system reads
\mar{zz41}\beq
H=\frac12p^2-\frac1r. \label{zz41}
\eeq
The Kepler system is a Hamiltonian system on a symplectic manifold
\mar{zz42}\beq
Z=\Bbb R^4\setminus \{0\} \label{zz42}
\eeq
endowed with the symplectic form $\Om$ (\ref{zz43}).

Let us consider the functions
\mar{zz44,5}\ben
&& M_{12}=-M_(21)=q_1p_2-q_2p_1,  \label{zz44}\\
&& A_i=\op\sum_j M_{ij}p_j -\frac{q_i}{r}=q_ip^2 -p_i(p,q)-\frac{q_i}{r}, \qquad i=1,2,
\label{zz45}
\een
on the symplectic manifold $Z$ (\ref{zz42}). It is readily
observed that they are integrals of motion of the Hamiltonian $H$
(\ref{zz41}). One calls $M_{12}$ the angular momentum and $(A_i)$
the Rung--Lenz vector. Let us denote
\mar{zz50}\beq
M^2=(M_{12})^2, \qquad  A^2=(A_1)^2 + (A_2)^2=2M^2H+1.
\label{zz50}
\eeq

Let $Z_0\subset Z$ be a closed subset of points where $M_{12}=0$.
A direct computation shows that the functions $(M_{12},A_i)$
(\ref{zz44}) -- (\ref{zz45}) are independent on an open
submanifold
\mar{zz47}\beq
U=Z\setminus Z_0. \label{zz47}
\eeq
of $Z$. At the same time, the functions $(H,M_{12},A_i)$ are
nowhere independent on $U$ because it follows from the expression
(\ref{zz50}) that
\mar{zz52}\beq
H=\frac{A^2-1}{2M^2} \label{zz52}
\eeq
on $U$ (\ref{zz47}). The well known dynamics of the Kepler system
shows that the Hamiltonian vector field of its Hamiltonian is
complete on $U$ (but not on Z).

The Poisson bracket of integrals of motion $M_{12}$ (\ref{zz44})
and $A_i$ (\ref{zz45}) obeys the relations
\mar{zz56,7}\ben
&& \{M_{12},A_i\}=\eta_{2i}A_1 -\eta_{1i}A_2, \label{zz56}\\
&& \{A_1,A_2\}=2HM_{12}=\frac{A^2-1}{M_{12}}, \label{zz57}
\een
where $\eta_{ij}$ is an Euclidean metric on $\Bbb R^2$. It is
readily observed that these relations take the form (\ref{nc1}).
However, the matrix function $\bs$ of the relations (\ref{zz56})
-- (\ref{zz57}) fails to be of constant rank at points where
$H=0$. Therefore, let us consider the open submanifolds
$U_-\subset U$ where $H<0$ and $U_+$ where $H>0$. Then we observe
that the Kepler system with the Hamiltonian $H$ (\ref{zz41}) and
the integrals of motion $(M_{ij},A_i)$ (\ref{zz44}) --
(\ref{zz45}) on $U_-$ and the Kepler system with the Hamiltonian
$H$ (\ref{zz41}) and the integrals of motion $(M_{ij},A_i)$
(\ref{zz44}) -- (\ref{zz45}) on $U_+$ are superintegrable
Hamiltonian systems. Moreover, these superintegrable systems can
be brought into the form (\ref{zz60}) as follows.

Let us replace the integrals of motions $A_i$ with the integrals
of motion
\mar{zz61}\beq
L_i=\frac{A_i}{\sqrt{-2H}} \label{zz61}
\eeq
on $U_-$, and with the integrals of motion
\mar{zz62}\beq
K_i=\frac{A_i}{\sqrt{2H}} \label{zz62}
\eeq
on $U_+$.

The superintegrable system $(M_{12},L_i)$ on $U_-$ obeys the
relations
\mar{zz66,7}\ben
&& \{M_{12},L_i\}=\eta_{2i}L_1 -\eta_{1i}L_2, \label{zz66}\\
&& \{L_1,L_2\}=-M_{12}. \label{zz67}
\een
Let us denote $M_{i3}=-L_i$ and put the indexes
$\m,\nu,\al,\bt=1,2,3$. Then the relations (\ref{zz66}) --
(\ref{zz67}) are brought into the form
\mar{zz68}\beq
\{M_{\m\nu},M_{\al\bt}\}=\eta_{\m\bt}M_{\nu\al} +
\eta_{\nu\al}M_{\m\bt} -
\eta_{\m\al}M_{\nu\bt}-\eta_{\nu\bt}M_{\m\al} \label{zz68}
\eeq
where $\eta_{\m\nu}$ is an Euclidean metric on $\Bbb R^3$. A
glance at the expression (\ref{zz68}) shows that the integrals of
motion $M_{12}$ (\ref{zz44}) and $L_i$ (\ref{zz61}) constitute the
Lie algebra $\cG=so(3)$. Its corank equals 1. Therefore the
superintegrable system $(M_{12}, L_i)$ on $U_-$ is maximally
superintegrable. The equality (\ref{zz52}) takes the form
\mar{zz100}\beq
M^2 +L^2=-\frac1{2H}. \label{zz100}
\eeq

The superintegrable system $(M_{12},K_i)$ on $U_+$ obeys the
relations
\mar{zz76,7}\ben
&& \{M_{12},K_i\}=\eta_{2i}K_1 -\eta_{1i}K_2, \label{zz76}\\
&& \{K_1,K_2\}=M_{12}. \label{zz77}
\een
Let us denote $M_{i3}=-K_i$ and put the indexes
$\m,\nu,\al,\bt=1,2,3$. Then the relations (\ref{zz76}) --
(\ref{zz77}) are brought into the form
\mar{zz78}\beq
\{M_{\m\nu},M_{\al\bt}\}=\rho_{\m\bt}M_{\nu\al} +
\rho_{\nu\al}M_{\m\bt} -
\rho_{\m\al}M_{\nu\bt}-\rho_{\nu\bt}M_{\m\al} \label{zz78}
\eeq
where $\rho_{\m\nu}$ is a pseudo-Euclidean metric of signature
$(+,+,-)$ on $\Bbb R^3$. A glance at the expression (\ref{zz78})
shows that the integrals of motion $M_{12}$ (\ref{zz44}) and $K_i$
(\ref{zz62}) constitute the Lie algebra $so(2,1)$. Its corank
equals 1. Therefore the superintegrable system $(M_{12}, K_i)$ on
$U_+$ is maximally superintegrable. The equality (\ref{zz52})
takes the form
\mar{zz101}\beq
K^2 -M^2=\frac1{2H}. \label{zz101}
\eeq

Thus, the Kepler system on a phase space $\Bbb R^4$ falls into two
different maximally superintegrable systems on open submanifolds
$U_-$ and $U_+$ of $\Bbb R^4$. We agree to call them the Kepler
superintegrable systems on $U_-$ and $U_+$, respectively.

Let us study the first one. Put
\mar{zz102,j50}\ben
&& F_1=-L_1, \qquad F_2=-L_2, \qquad F_3=-M_{12}, \label{zz102}\\
&& \{F_1,F_2\}=F_3, \qquad \{F_2,F_3\}=F_1, \qquad \{F_3,F_1\}=F_2.
\label{j50}
\een
We have the fibered manifold
\mar{zz103}\beq
F: U_-\to N\subset\cG^*, \label{zz103}
\eeq
which is the momentum mapping to the Lie coalgebra
$\cG^*=so(3)^*$, endowed with the coordinates $(x_i)$ such that
integrals of motion $F_i$ on $\cG^*$ read $F_i=x_i$. A base $N$ of
the fibered manifold (\ref{zz103}) is an open submanifold of
$\cG^*$ given by the coordinate condition $x_3\neq 0$. It is a
union of two contractible components defined by the conditions
$x_3>0$ and $x_3<0$. The coinduced Lie--Poisson structure on $N$
takes the form
\mar{j51}\beq
w= x_2\dr^3\w\dr^1 + x_3\dr^1\w\dr^2 + x_1\dr^2\w\dr^3.
\label{j51}
\eeq

The coadjoint action of $so(3)$ on $N$ reads
\be
\ve_1=x_3\dr^2-x_2\dr^3, \qquad \ve_2=x_1\dr^3-x_3\dr^1, \qquad
\ve_3=x_2\dr^1-x_1\dr^2.
\ee
The orbits of this coadjoint action are given by the equation
\be
x_1^2 + x_2^2 + x_3^2={\rm const}.
\ee
They are the level surfaces of the Casimir function
\be
C=x_1^2 + x_2^2 + x_3^2
\ee
and, consequently, the Casimir function
\mar{zz120}\beq
h=-\frac12(x_1^2 + x_2^2 + x_3^2)^{-1}. \label{zz120}
\eeq
A glance at the expression (\ref{zz100}) shows that the pull-back
$F^*h$ of this Casimir function (\ref{zz120}) onto $U_-$ is the
Hamiltonian $H$ (\ref{zz41}) of the Kepler system on $U_-$.

As was mentioned above, the Hamiltonian vector field of $F^*h$ is
complete. Furthermore, it is known that invariant submanifolds of
the superintegrable Kepler system on $U_-$ are compact. Therefore,
the fibered manifold $F$ (\ref{zz103}) is a fiber bundle in
accordance with Proposition \ref{cmp15}. Moreover, this fiber
bundle is trivial because $N$ is a disjoint union of two
contractible manifolds. Consequently, it follows from Theorem
\ref{cmp36} that the Kepler superintegrable system on $U_-$ is
globally superintegrable, i.e., it admits global generalized
action-angle coordinates as follows.

The Poisson manifold $N$ (\ref{zz103}) can be endowed with the
coordinates
\be
(I,x_1,\g), \qquad I<0, \qquad \g\neq\frac{\pi}2,\frac{3\pi}2,
\ee
defined by the equalities
\mar{j52}\beq
I=-\frac12(x_1^2 + x_2^2 + x_3^2)^{-1}, \quad
x_2=(-\frac1{2I}-x_1^2)^{1/2}\sin\g, \quad
x_3=(-\frac1{2I}-x_1^2)^{1/2}\cos\g. \label{j52}
\eeq
It is readily observed that the coordinates (\ref{j52}) are the
Darboux coordinates of the Lie--Poisson structure (\ref{j51}) on
$U_-$, namely,
\mar{j53}\beq
w=\frac{\dr}{\dr x_1}\w \frac{\dr}{\dr \g}. \label{j53}
\eeq

Let $\vt_I$ be the Hamiltonian vector field of the Casimir
function $I$ (\ref{j52}). By virtue of Proposition \ref{nc8}, its
flows are invariant submanifolds of the Kepler superintegrable
system on $U_-$. Let $\al$ be a parameter along the flows of this
vector field, i.e.,
\mar{zz121}\beq
\vt_I= \frac{\dr}{\dr \al}. \label{zz121}
\eeq
Then $N$ is provided with the generalized action-angle coordinates
$(I,x_1,\g,\al)$ such that the Poisson bivector associated to the
symplectic form $\Om$ on $N$ reads
\mar{j54}\beq
W= \frac{\dr}{\dr I}\w \frac{\dr}{\dr \al} + \frac{\dr}{\dr x_1}\w
\frac{\dr}{\dr \g}. \label{j54}
\eeq
Accordingly, Hamiltonian vector fields of integrals of motion
$F_i$ (\ref{zz102}) take the form
\be
&& \vt_1= \frac{\dr}{\dr \g}, \\
&& \vt_2= \frac1{4I^2}(-\frac1{2I}-x_1^2)^{-1/2}\sin\g
\frac{\dr}{\dr \al} - x_1 (-\frac1{2I}-x_1^2)^{-1/2}\sin\g
\frac{\dr}{\dr \g} - \\
&& \qquad (-\frac1{2I}-x_1^2)^{1/2}\cos\g
\frac{\dr}{\dr x_1}, \\
&& \vt_3= \frac1{4I^2}(-\frac1{2I}-x_1^2)^{-1/2}\cos\g
\frac{\dr}{\dr \al} - x_1 (-\frac1{2I}-x_1^2)^{-1/2}\cos\g
\frac{\dr}{\dr \g} + \\
&&\qquad (-\frac1{2I}-x_1^2)^{1/2}\sin\g \frac{\dr}{\dr x_1}.
\ee
A glance at these expressions shows that the vector fields $\vt_1$
and $\vt_2$ fail to be complete on $U_-$ (see Remark \ref{zz90}).

One can say something more about the angle coordinate $\al$. The
vector field $\vt_I$ (\ref{zz121} reads
\be
\frac{\dr}{\dr\al}= \op\sum_i(\frac{\dr H}{\dr p_i}\frac{\dr}{\dr
q_i}-\frac{\dr H}{\dr q_i}\frac{\dr}{\dr p_i}).
\ee
This equality leads to the relations
\be
\frac{\dr q_i}{\dr \al}=\frac{\dr H}{\dr p_i}, \qquad \frac{\dr
p_i}{\dr \al}=-\frac{\dr H}{\dr q_i},
\ee
which take the form of the Hamilton equations. Therefore, the
coordinate $\al$ is a cyclic time $\al=t\,{\rm mod}2\pi$ given by
the well-known expression
\be
\al=\f-a^{3/2}e\sin(a^{-3/2}\f),\quad
r=a(1-e\cos(a^{-3/2}\f))\quad a=-\frac1{2I}, \quad
e=(1+2IM^2)^{1/2}.
\ee

Now let us turn to the Kepler superintegrable system on $U_+$. It
is a globally superintegrable system with noncompact invariant
submanifolds as follows.

Put
\mar{zz102a,j50a}\ben
&& S_1=-K_1, \qquad S_2=-K_2, \qquad S_3=-M_{12}, \label{zz102a}\\
&& \{S_1,S_2\}=-S_3, \qquad \{S_2,S_3\}=S_1, \qquad \{S_3,S_1\}=S_2.
\label{j50a}
\een
We have the fibered manifold
\mar{zz103a}\beq
S: U_+\to N\subset\cG^*, \label{zz103a}
\eeq
which is the momentum mapping to the Lie coalgebra
$\cG^*=so(2,1)^*$, endowed with the coordinates $(x_i)$ such that
integrals of motion $S_i$ on $\cG^*$ read $S_i=x_i$. A base $N$ of
the fibered manifold (\ref{zz103a}) is an open submanifold of
$\cG^*$ given by the coordinate condition $x_3\neq 0$. It is a
union of two contractible components defined by the conditions
$x_3>0$ and $x_3<0$. The coinduced Lie--Poisson structure on $N$
takes the form
\mar{j51a}\beq
w= x_2\dr^3\w\dr^1 - x_3\dr^1\w\dr^2 + x_1\dr^2\w\dr^3.
\label{j51a}
\eeq

The coadjoint action of $so(2,1)$ on $N$ reads
\be
\ve_1=-x_3\dr^2-x_2\dr^3, \qquad \ve_2=x_1\dr^3+x_3\dr^1, \qquad
\ve_3=x_2\dr^1-x_1\dr^2.
\ee
The orbits of this coadjoint action are given by the equation
\be
x_1^2 + x_2^2 - x_3^2={\rm const}.
\ee
They are the level surfaces of the Casimir function
\be
C=x_1^2 + x_2^2 - x_3^2
\ee
and, consequently, the Casimir function
\mar{zz120a}\beq
h=\frac12(x_1^2 + x_2^2 - x_3^2)^{-1}. \label{zz120a}
\eeq
A glance at the expression (\ref{zz101}) shows that the pull-back
$S^*h$ of this Casimir function (\ref{zz120a}) onto $U_+$ is the
Hamiltonian $H$ (\ref{zz41}) of the Kepler system on $U_+$.

As was mentioned above, the Hamiltonian vector field of $S^*h$ is
complete. Furthermore, it is known that invariant submanifolds of
the superintegrable Kepler system on $U_+$ are diffeomorphic to
$\Bbb R$. Therefore, the fibered manifold $S$ (\ref{zz103a}) is a
fiber bundle in accordance with Proposition \ref{cmp15}. Moreover,
this fiber bundle is trivial because $N$ is a disjoint union of
two contractible manifolds. Consequently, it follows from Theorem
\ref{cmp36} that the Kepler superintegrable system on $U_+$ is
globally superintegrable, i.e., it admits global generalized
action-angle coordinates as follows.

The Poisson manifold $N$ (\ref{zz103a}) can be endowed with the
coordinates
\be
(I,x_1,\la), \qquad I>0, \qquad \la\neq 0,
\ee
defined by the equalities
\be
I=\frac12(x_1^2 + x_2^2 - x_3^2)^{-1}, \quad
x_2=(\frac1{2I}-x_1^2)^{1/2}\cosh\la, \quad
x_3=(\frac1{2I}-x_1^2)^{1/2}\sinh\la.
\ee
These coordinates are the Darboux coordinates of the Lie--Poisson
structure (\ref{j51a}) on $N$, namely,
\mar{j53a}\beq
w=\frac{\dr}{\dr \la}\w \frac{\dr}{\dr x_1}. \label{j53a}
\eeq

Let $\vt_I$ be the Hamiltonian vector field of the Casimir
function $I$ (\ref{j52}). By virtue of Proposition \ref{nc8}, its
flows are invariant submanifolds of the Kepler superintegrable
system on $U_+$. Let $\tau$ be a parameter along the flows of this
vector field, i.e.,
\mar{zz121a}\beq
\vt_I= \frac{\dr}{\dr \tau}. \label{zz121a}
\eeq
Then $N$ (\ref{zz103a}) is provided with the generalized
action-angle coordinates $(I,x_1,\la,\tau)$ such that the Poisson
bivector associated to the symplectic form $\Om$ on $U_+$ reads
\mar{j54a}\beq
W= \frac{\dr}{\dr I}\w \frac{\dr}{\dr \tau} + \frac{\dr}{\dr
\la}\w \frac{\dr}{\dr x_1}. \label{j54a}
\eeq
Accordingly, Hamiltonian vector fields of integrals of motion
$S_i$ (\ref{zz102a}) take the form
\be
&& \vt_1= -\frac{\dr}{\dr \la}, \\
&& \vt_2= \frac1{4I^2}(\frac1{2I}-x_1^2)^{-1/2}\cosh\la
\frac{\dr}{\dr \tau} + x_1 (\frac1{2I}-x_1^2)^{-1/2}\cosh\la
\frac{\dr}{\dr \la} + \\
&& \qquad (\frac1{2I}-x_1^2)^{1/2}\sinh\la
\frac{\dr}{\dr x_1}, \\
&& \vt_3= \frac1{4I^2}(\frac1{2I}-x_1^2)^{-1/2}\sinh\la
\frac{\dr}{\dr \tau} + x_1 (\frac1{2I}-x_1^2)^{-1/2}\sinh\la
\frac{\dr}{\dr \la} + \\
&&\qquad (\frac1{2I}-x_1^2)^{1/2}\cosh\la \frac{\dr}{\dr x_1}.
\ee

Similarly to the angle coordinate $\al$ (\ref{zz121}), the
generalized angle coordinate $\tau$ (\ref{zz121a}) obeys the
Hamilton equations
\be
\frac{\dr q_i}{\dr \tau}=\frac{\dr H}{\dr p_i}, \qquad \frac{\dr
p_i}{\dr \tau}=-\frac{\dr H}{\dr q_i}.
\ee
Therefore, it is the time $\tau=t$ given by the well-known
expression
\be
\tau=s-a^{3/2}e\sinh (a^{-3/2}s),\quad r=a(e\cosh
(a^{-3/2}s)-1)\quad a=\frac1{2I}, \quad e=(1+2IM^2)^{1/2}.
\ee


\begin{thebibliography}{ddd}

\bibitem{arn1} V. Arnold (Ed.), {\it Dynamical Systems III, IV}
(Springer-Verlag, Berlin, 1990).


\bibitem{bols03} A. Bolsinov and B. Jovanovi\'c, Noncommutative
integrability, moment map and geodesic flows,  {\it Ann. Global
Anal. Geom.} {\bf 23} (2003) 305.


\bibitem{daz} P. Dazord and T. Delzant, Le probleme general des
variables actions-angles, {\it J. Diff. Geom.} {\bf 26} (1987)
223.


\bibitem{fasso05} F. Fass\'o, Superintegrable Hamiltonian systems:
geometry and applications, {\it Acta Appl. Math.} {\bf 87} (2005)
93.

\bibitem{fior} E. Fiorani, G. Giachetta and G. Sardanashvily, The Liouville -- Arnold -- Nekhoroshev theorem
for noncompact invariant manifolds, {\it J. Phys. A} {\bf 36}
(2003) L101.

\bibitem{fior2} E. Fiorani and G. Sardanashvily, Noncommutative integrability on
noncompact invariant manifold, {\it J. Phys. A} {\bf 39} (2006)
14035.

\bibitem{jmp07} E. Fiorani and G. Sardanashvily, Global action-angle coordinates for completely integrable systems
with noncompact invariant manifolds, {\it J. Math. Phys.} {\bf 48}
(2007) 032001.

\bibitem{gaeta} G. Gaeta, The Poincar\'e -- Lyapounov -- Nekhoroshev theorem,
{\it Ann. Phys.} {\bf 297} (2002) 157.

\bibitem{gaeta03} G. Gaeta, The Poincar\'e--Nekhoroshev map, {\it J.
Nonlin. Math. Phys.} {\bf 10} (2003) 51.


\bibitem{jmp03} G. Giachetta, L. Mangiarotti and G. Sardanashvily, Bi-Hamiltonian partially
integrable systems, {\it J. Math. Phys.} {\bf 44} (2003) 1984.

\bibitem{book05} G. Giachetta, L. Mangiarotti and G. Sardanashvily, {\it
Geometric and Algebraic Topological Methods in Quantum Mechanics}
(World Scientific, Singapore, 2005).

\bibitem{guil} V. Guillemin and S. Sternberg, {\it Symplectic Techniques
in Physics} (Cambr. Univ. Press, Cambridge, 1984).

\bibitem{hir} F. Hirzebruch, {\it Topological Methods in
Algebraic Geometry} (Springer-Verlag, Berlin, 1966).


\bibitem{laz} V. Lazutkin, {\it KAM Theory and Semiclassical
Approximations to Eigenfunctions} (Springer-Verlag, Berlin, 1993).

\bibitem{libe} P. Libermann and C-M. Marle, {\it Symplectic Geometry and
Analytical Mechanics} (D.Reidel Publishing Company, Dordrecht.
1987).

\bibitem{book98} L.Mangiarotti and G.Sardanashvily, {\it Gauge
Mechanics} (World Scientific, Singapore, 1998).


\bibitem{meig} G. Meigniez, Submersions. fibrations and bundles,
{\it Trans. Amer. Math. Soc.} {\bf 354} (2002) 3771.


\bibitem{mishc} A. Mishchenko  and A. Fomenko, Generalized Liouville
method of integration of Hamiltonian systems, {\it Funct. Anal.
Appl.} {\bf 12} (1978) 113.

\bibitem{nekh94} N. Nekhoroshev, The Poincar\'e -- Lyapounov -- Liuoville -- Arnold theorem,
{\it Funct. Anal. Appl.} {\bf 28} (1994) 128.

\bibitem{onish} A. Onishchik (Ed.), {\it Lie Groups and Lie Algebras I.
Foundations of Lie Theory, Lie Transformation Groups}
(Springer-Verlag, Berlin, 1993).

\bibitem{palais} R. Palais,A global formulation of Lie theory
of transformation groups,{\it Mem. Am. Math. Soc.} {\bf 22} (1957)
1.


\bibitem{susm} H. Sussmann,Orbits of families of vector fields and
integrability of distributions, {\it Trans. Amer. Math. Soc.} {\bf
180} (1973) 171.

\bibitem{vaism} I. Vaisman, {\it Lectures on the Geometry of
Poisson Manifolds} (Birkh\"auser, Basel, 1994).

\bibitem{vin} A. Vinogradov and B. Kupershmidt, The structure of Hamiltonian mechanics,
{\it Russian Math. Surveys} {\bf 32} (4) (1977) 177.



\end{thebibliography}
\end{document}